 \providecommand{\keywords}[1]{\small \textbf{Keywords:} #1}
  \providecommand{\msc}[1]{\small \textbf{Mathematics subject classification:} #1}
\title{Integrability of the $n$-dimensional axially symmetric Chaplygin sphere  \footnote{Dedicated to S.A. Chaplygin on the occasion of his 150th birthday.}}
\author{ Luis C.~Garc\'ia-Naranjo}
\numberwithin{equation}{section}
\numberwithin{table}{section}
\numberwithin{figure}{section}
\newtheorem{theorem}{Theorem}[section]
\newtheorem{proposition}[theorem]{Proposition}
\newtheorem{corollary}[theorem]{Corollary}
\theoremstyle{definition}
\newtheorem{remark}[theorem]{Remark}
\newtheorem*{remarks*}{Remarks}
\providecommand{\customgenericname}{}
\newcommand{\newcustomtheorem}[2]{%
  \newenvironment{#1}[1]
  {%
   \renewcommand\customgenericname{#2}%
   \renewcommand\theinnercustomgeneric{##1}%
   \innercustomgeneric
  }
  {\endinnercustomgeneric}
}
\newcommand{\defn}[1]{{\bfseries\itshape{#1}}}
\def\headcolour{\color{Grey}}
\headcolour\textsc{L.C.~Garc\'ia-Naranjo  }]{\headcolour\textsc{Integrability of the $n$-dimensional axially symmetric Chaplygin sphere }}
\def\restr#1{\,\vrule height1.2ex width.4pt
  depth0.8ex\lower0.4ex\hbox{\scriptsize $\,#1$}}
\newcommand{\R}{\mathbb{R}}
\newcommand{\I}{\mathbb{I}}
\newcommand{\J}{\mathbb{J}}
\newcommand{\so}{\mathfrak{so}}
\newcommand{\SO}{\mathrm{SO}}
\newcommand{\SE}{\mathrm{SE}}
\newcommand{\Ss}{\mathrm{S}}
\begin{document}

\maketitle

\maketitle

\begin{abstract}
We consider the $n$-dimensional Chaplygin sphere  under the assumption that the mass distribution of  the
sphere is axisymmetric. 
 We prove that for initial conditions whose angular momentum about the contact point is vertical, the 
dynamics is quasi-periodic. For $n=4$ we perform the reduction by the associated $\SO(3)$ symmetry and show that
the reduced system is integrable by the Euler-Jacobi theorem.
  \end{abstract}

\keywords{
Nonholonomic dynamics, Integrability, Quasi-periodicity, Symmetry, Singular reduction.
}
\vspace{0.2cm}

\msc{37J60, 70E18, 70E40, 58D19.}

{\small
\tableofcontents
}

\section{Introduction}

The Chaplygin sphere is perhaps the most interesting example of an integrable system in nonholonomic mechanics. It concerns 
 the motion of a sphere, whose centre of mass coincides with
its geometric centre, that rolls without slipping on the plane. The integrability of the problem was proved by S.A. Chaplygin in
his celebrated paper \cite{chapsphere}.

The $n$-dimensional generalisation of the problem was  introduced by Fedorov and Kozlov in~\cite{FedKoz} where the authors
conjecture that this generalisation is also integrable. To the author's best  knowledge such conjecture is only known to 
be true in the following cases:
\begin{enumerate}
\item[1.] the inertia tensor $\I:\so(n)\to \so(n)$  is spherical, i.e. a constant factor of the identity operator.  In this simple case the dynamics is trivially integrable since the angular velocity remains  constant along the motion;
\item[2.] the case treated by Jovanovi\'c in \cite{Jovan}. Here the initial condition is restricted to have 
  {\em  horizontal momentum}. Moreover,  the inertia operator $\I:\so(n)\to \so(n)$ is assumed to be of a very specific type and, in particular,
to map the space of rank-two matrices in $\so(n)$
into itself (see \cite[Eq. (49)]{Jovan}). 
\end{enumerate}

For $n\geq 4$ the inertia operator considered by Jovanovi\'c in case 2 above does not generally correspond
 to a physical inertia operator of
a multidimensional rigid body unless such body is axisymmetric (see \cite[Remark 2]{Jovan} and \cite[Appendix B]{FGNM18}).

In this paper we further analyse the dynamics of the system under the assumption that the distribution of mass of the sphere is axisymmetric. In our approach we take advantage of the associated additional $\SO(n-1)$ symmetry of the problem. This kind of 
symmetry analysis has already proved to be useful to determine new cases of integrability of the $n$-dimensional Veselova problem
\cite{FGNM18}.

Our main result is to prove that the dynamics of the problem for arbitrary $n$ 
 is quasi-periodic if the angular momentum about the contact point
is {\em vertical}. We also consider general initial conditions in the case  $n=4$   and show that  the reduction of the system by the additional 
$\SO(3)$ symmetry 
 is integrable by
the Euler-Jacobi theorem.

The paper is organised as follows. We first recall the equations of motion and their main properties in Section \ref{S:Prelim}.
Next, in Section \ref{S:VertHor}, we define the spaces of vertical and horizontal momentum both in 3 and $n$D.
Section \ref{S:Axisymmetric} studies the axisymmetric sphere for general $n$ and Section \ref{S:4D} focuses on the case $n=4$.

\section{Preliminaries}
\label{S:Prelim}

\subsection{The classical 3D Chaplygin sphere}

The homogeneity  of the plane where the
rolling takes place leads to a symmetry of the problem with respect to the action  of the euclidean 
 group $\SE(2)$. The  reduced equations of motion are  well-known and given by 
 \begin{equation}
 \label{eq:motion3D}
\dot M = M\times \Omega, \qquad \dot \gamma = \gamma \times \Omega,
\end{equation}
where $M \in \R^3$ is the angular momentum of the sphere about the contact point, $\Omega\in \R^3$ is the angular velocity,  $\gamma \in \R^3$ is the normal vector
to the plane and  `$\times$' denotes  the  cross product. 
All vectors $M, \Omega, \gamma$ are written in a
body frame that is attached to the centre of the sphere and satisfy
\begin{equation}
 \label{eq:mom3D}
M= \I(\Omega) + b\gamma \times ( \Omega \times \gamma),
\end{equation}
where the $3\times 3$ matrix $\I$ is the tensor of inertia, and $b=mr^2$ where $m$, $r>0$ denote the mass and the radius of the sphere.
We assume that the body frame is aligned with the principal axes of inertia so $\I = \mbox{diag}(I_1,I_2,I_3)$, with $I_j>0$ denoting
the principal moments of inertia.

The system~\eqref{eq:motion3D} possesses the trivial integral  $\|\gamma\|^2$ and from now
on we restrict our attention to  $\|\gamma\|^2=1$, and  interpret $\gamma$ as
an element of the unit sphere $\Ss^2$. For future reference we note that the first equation in \eqref{eq:motion3D} may  be equivalently 
written as
\begin{equation}
\label{eq:motion-velocity3D}
\I ( \dot \Omega) =  \I(\Omega)  \times \Omega -b \gamma \times (\dot \Omega \times \gamma).
\end{equation}

For a  fixed $\gamma\in \Ss^2$, Eq.~\eqref{eq:mom3D} defines a linear relation between $M$ and $\Omega$ that
we denote by $M=L_\gamma(\Omega)$. As may be checked directly, the  determinant of $L_\gamma$ satisfies 
\begin{equation*}
 \det (L_\gamma)  = c\left ( 1- b \langle {\tilde \I}^{-1}\gamma, \gamma \rangle \right )>0,
\end{equation*}
where ${\tilde \I}:= \I+b\mbox{Id}_3$,  the constant $c=\prod_{j=1}^3(I_j+b)=\det ({\tilde \I})$ and $\langle \cdot , \cdot \rangle$ is the euclidean
inner product in $\R^3$.
The equations of motion may be written down in  explicit form in terms of $(M,\gamma)\in \R^3\times \Ss^2$ by noting that the inversion of $L_\gamma$ leads to
\begin{equation}
\label{eq:invL3D}
\Omega= L_\gamma^{-1}(M)={\tilde \I}^{-1}M + \frac{ bc \langle {\tilde \I}^{-1} M , \gamma\rangle }{ \det (L_\gamma) }\tilde{\I}^{-1}\gamma.
\end{equation}

\subsection*{First integrals and measure preservation}

The equations of motion~\eqref{eq:motion3D} state that the expression of the vectors $M$ and $\gamma$ 
in the space frame is constant. This  observation is trivial in the case of  $\gamma$, but is quite  remarkable in the case of $M$, and 
leads to the existence of the following first integrals:
\begin{equation*}
f_1= \langle M,\gamma \rangle, \qquad f_2=\| M \|^2.
\end{equation*}
In addition, the system preserves the energy $H=\frac{1}{2}\langle M,\Omega \rangle$ and possesses the invariant measure
\begin{equation*}
\mu = \frac{1}{ \sqrt{\det  (L_\gamma) }} \, dM\, d\gamma = \sqrt{\det  (L_\gamma) } \, d\Omega\, d\gamma .
\end{equation*}
Therefore the equations~\eqref{eq:motion3D}, which define a vector field on the 5-dimensional phase
space $P=\R^3\times \Ss^2\ni(M,\gamma)$, possess 3 independent integrals $f_1, f_2, H$ together with the invariant measure
$\mu$, and are thus integrable by the Euler-Jacobi  theorem (see e.g. \cite{Arnold}). The explicit integration of the 
equations \eqref{eq:motion3D} was obtained by  Chaplyign
 in \cite{chapsphere}.

\subsection{The $n$-dimensional Chaplgyin sphere}

The multi-dimensional generalization of the Chaplygin sphere was first considered by 
Fedorov and Kozlov~\cite{FedKoz}. In this case the $n$-dimensional sphere rolls without slipping 
on an $n-1$ dimensional hyperplane whose homogeneity  leads to an $\SE(n-1)$ symmetry, and
the  reduced equations of the motion are given by
\begin{equation}
\label{eq:motion-general}
\dot M =[M , \Omega ], \qquad \dot \gamma = -\Omega \gamma,
\end{equation}
where the angular momentum about the contact point  $M$  and the angular velocity $\Omega$ are now 
elements of the Lie algebra $\so(n)$ of skew-symmetric matrices and $[M , \Omega ]$ denotes their commutator.
As before, 
$\gamma \in \R^n$ is the Poisson vector that gives body coordinates of the unit normal to the hyperplane where the rolling takes
place; in particular $\|\gamma\|=1$ throughout the motion so we think of $\gamma\in \Ss^{n-1}$. 
All quantities are expressed in the body frame that is attached to the centre of the sphere.
 The relationship between $M$ and $\Omega$ that generalizes \eqref{eq:mom3D}
  is\footnote{We recall that the wedge product of 
column vectors
$u, v\in \R^n$ is defined as $u\wedge v = uv^t-vu^t\in \so(n)$.}
\begin{equation}
\label{eq:M-O}
M=L_\gamma(\Omega)=  \I (\Omega) +b(\Omega \gamma) \wedge \gamma, 
\end{equation}
where, as before,  $b=mr^2$. The   inertia tensor is now a map $\I:\so(n)\to \so(n)$ of the form 
\begin{equation}
\label{eq:Inertia}
\I (\Omega) = \J \Omega + \Omega \J,
\end{equation}
where $\J$ is the \defn{mass tensor}, which is is a constant, symmetric, $n \times n$ matrix that depends on the mass distribution of the body.  By an appropriate choice of a body frame, $\J$ may be assumed to be diagonal with positive entries (see, e.g., Ratiu~\cite{Ratiu80}).

For further reference we note that, in analogy with~\eqref{eq:motion-velocity3D},  the first equation in~\eqref{eq:motion-general} may
be rewritten as
\begin{equation}
\label{eq:motion-velocitynD}
\I ( \dot \Omega) = [ \I(\Omega) , \Omega]  -b  (\dot \Omega \gamma)\wedge \gamma.
\end{equation}

To the author's best knowledge, an explicit expression for $L_\gamma^{-1}$ that generalizes \eqref{eq:invL3D} 
for a general inertia tensor is unknown. In Proposition~\ref{P:inversion} below we give such formula under the  assumption
that the   inertia tensor is axisymmetric.

\subsection*{First integrals and measure preservation}

As in the 3D case, the equations of motion~\eqref{eq:motion-general} state that the expressions of  $M$ and $\gamma$ 
in the space frame is constant and this leads to the existence of several integrals of motion. To see this, note that for any 
$\sigma \in \R$, the
matrix $M+\sigma \gamma \gamma^t$ undergoes an iso-spectral evolution:
\begin{equation}
\label{eq:iso-spectral}
\frac{d}{dt} (M+\sigma \gamma \gamma^t)= \left [ M+\sigma \gamma \gamma^t, \Omega \right ].
\end{equation}
As a consequence, the coefficients of the two-variable polynomial 
\begin{equation*}
\label{eq:pol-first-integrals}
p(\lambda, \sigma) = \det ( M+\sigma \gamma \gamma^t - \lambda \mbox{Id}_n),
\end{equation*}
are first integrals.  In addition to these integrals, the energy of the system is also preserved. In this $n$-dimensional case it is given
by
\begin{equation*}
H=\frac{1}{2} \langle M, \Omega \rangle_\kappa ,
\end{equation*}
where $\langle \cdot, \cdot \rangle_\kappa$ is the Killing metric in $\so(n)$ defined 
by $ \langle \xi_1, \xi_2 \rangle_\kappa = -\frac{1}{2}\mbox{tr}(\xi_1\xi_2)$ for $\xi_1, \xi_2 \in \so(n)$. 

The work of  Fedorov and Kozlov~\cite{FedKoz} shows that the $n$-dimensional system also possess a smooth invariant
measure that again may be written as 
\begin{equation}
\label{eq:invmeasuredet}
\mu = \frac{1}{ \sqrt{\det  (L_\gamma) }} \, dMd\gamma=  \sqrt{\det  (L_\gamma) } \, d\Omega\, d\gamma .
\end{equation}

Despite the large number of first integrals and the existence of an invariant measure, the 
integrability of the system for $n>3$ has only been established 
in a very particular case, described below,
by Jovanovi\'c~\cite{Jovan}, and for  spherical inertia tensors.

\section{Vertical and horizontal momentum}
\label{S:VertHor}

\subsection{The 3D case}
\label{SS:vert-hor-3D}

 The integration of the equations carried out by Chaplygin~\cite{chapsphere}  proceeds by first distinguishing two special classes of initial conditions
 that correspond to   \defn{vertical} and  \defn{horizontal momentum}. In  the first case the vectors
  $M$ and $\gamma$ are parallel and in the other perpendicular. With this in mind we define
 the subsets $\mathcal{V}, \mathcal{H},$ of the phase space $P=\R^3\times \Ss^2\ni(M,\gamma)$  by:
 \begin{equation}
 \label{eq:VHgeneraln}
\mathcal{V} = \{ (M,\gamma) \in P \, : \, M\times \gamma =0\}, \qquad  \mathcal{H} = \{ (M,\gamma) \in P \, : \, 
f_2=\langle M, \gamma \rangle=0\}.
\end{equation}
These are   submanifolds of $P$, of dimension
3 and 4 respectively, which are invariant by the flow of~\eqref{eq:motion3D}. Their intersection $\mathcal{V}\cap \mathcal{H}$ consists of  initial
conditions having $M=0$, which corresponds to the  sphere being at rest.
 
In his celebrated work~\cite{chapsphere},  Chaplygin integrated  the equations of motion for initial conditions in $\mathcal{H}$
in terms of hyperelliptic functions on a  genus 2 Riemann surface. He then  showed that the integration for generic initial 
conditions that do not belong to $\mathcal{H}$ nor $\mathcal{V}$, may be reduced to the case of  horizontal momentum by means
of an insightful   change of variables.

The integration of the equations for vertical momentum is much simpler since
 the equations of motion in this case reduce
to the standard Euler equations for the motion of a rigid body with tensor of inertia ${\tilde \I}$.
We recall this well-known result in the following proposition.

\begin{proposition}
\label{P:EulerTop}
Denote by $\Omega(t)$  the angular
velocity along a solution of \eqref{eq:motion3D} whose initial condition $(M_0,\gamma_0)$ lies in $\mathcal{V}$. Then $\Omega(t)$
is a solution of the Euler equations
 \begin{equation}
 \label{eq:vert-3D}
{\tilde \I} \dot \Omega = ({\tilde \I}\Omega) \times \Omega.
\end{equation}
\end{proposition}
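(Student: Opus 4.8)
The plan is to start from the condition defining $\mathcal{V}$, namely $M_0 \times \gamma_0 = 0$, and first show that this condition is preserved by the flow. From $\dot\gamma = \gamma\times\Omega$ and the first equation $\dot M = M\times\Omega$, one computes $\frac{d}{dt}(M\times\gamma) = \dot M\times\gamma + M\times\dot\gamma = (M\times\Omega)\times\gamma + M\times(\gamma\times\Omega)$. Expanding both triple products with the Jacobi identity (or the $bac$-$cab$ rule), the terms cancel pairwise, so $\frac{d}{dt}(M\times\gamma) = 0$; hence $M(t)\times\gamma(t) = 0$ for all $t$, confirming that $\mathcal{V}$ is invariant (this is already asserted in the text, so I could simply cite it). On $\mathcal{V}$ we may therefore write $M(t) = \nu(t)\,\gamma(t)$ for a scalar function $\nu(t)$, at least away from $\gamma$-independent degeneracies; in fact $\nu = \langle M,\gamma\rangle = f_1$ is a first integral, so $\nu$ is the constant $\nu_0 = \langle M_0,\gamma_0\rangle$ and $M(t) = \nu_0\,\gamma(t)$ throughout the motion.

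Next I would substitute $M = \nu_0\gamma$ into the constitutive relation \eqref{eq:mom3D}. We have $\I(\Omega) + b\,\gamma\times(\Omega\times\gamma) = \nu_0\gamma$. Using $\gamma\times(\Omega\times\gamma) = \Omega - \langle\Omega,\gamma\rangle\gamma$ (since $\|\gamma\|=1$), this becomes $\I(\Omega) + b\Omega - b\langle\Omega,\gamma\rangle\gamma = \nu_0\gamma$, i.e. ${\tilde\I}\Omega = \bigl(\nu_0 + b\langle\Omega,\gamma\rangle\bigr)\gamma$, which says that ${\tilde\I}\Omega$ is parallel to $\gamma$, equivalently $M = \nu_0\gamma$ with $M = {\tilde\I}\Omega + (\text{scalar})\gamma$. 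The cleaner route is to observe directly from \eqref{eq:mom3D} that $M - {\tilde\I}\Omega = b\gamma\times(\Omega\times\gamma) - b\Omega = -b\langle\Omega,\gamma\rangle\gamma$ is always a multiple of $\gamma$; so on $\mathcal{V}$, where $M$ is itself a multiple of $\gamma$, the vector ${\tilde\I}\Omega$ is also a multiple of $\gamma$. Thus there is a scalar $\mu(t)$ with ${\tilde\I}\Omega(t) = \mu(t)\gamma(t)$, and pairing with $\Omega$ gives $\mu = \langle{\tilde\I}\Omega,\Omega\rangle/\langle\gamma,\Omega\rangle$ when the denominator is nonzero; more usefully, $\mu\gamma = {\tilde\I}\Omega$ together with $M=\nu_0\gamma$ shows ${\tilde\I}\Omega$ and $M$ are proportional, with the proportionality constant possibly depending on $t$.

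With these identifications in hand, the derivation of \eqref{eq:vert-3D} is a direct computation. Differentiating ${\tilde\I}\Omega = \mu\gamma$ gives ${\tilde\I}\dot\Omega = \dot\mu\,\gamma + \mu\dot\gamma = \dot\mu\,\gamma + \mu(\gamma\times\Omega)$. On the other hand, from $\dot M = M\times\Omega$ with $M = \nu_0\gamma$ we also recover $\nu_0\dot\gamma = \nu_0(\gamma\times\Omega)$, consistently. To get the Euler form I would instead start from \eqref{eq:motion-velocity3D}: ${\tilde\I}\dot\Omega = \I\dot\Omega + b\dot\Omega = (\I\Omega)\times\Omega - b\gamma\times(\dot\Omega\times\gamma) + b\dot\Omega = (\I\Omega)\times\Omega + b\langle\dot\Omega,\gamma\rangle\gamma$, using the same vector identity. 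Now $(\I\Omega)\times\Omega = ({\tilde\I}\Omega - b\Omega)\times\Omega = ({\tilde\I}\Omega)\times\Omega$, and since ${\tilde\I}\Omega = \mu\gamma$ we have $({\tilde\I}\Omega)\times\Omega = \mu\,\gamma\times\Omega = \mu\dot\gamma$. It remains to show the extra term $b\langle\dot\Omega,\gamma\rangle\gamma$ equals the discrepancy; differentiating $\langle{\tilde\I}\Omega,\gamma\rangle\cdot(\text{something})$ or, more cleanly, noting that on $\mathcal{V}$ the relation ${\tilde\I}\Omega\parallel\gamma$ forces $({\tilde\I}\Omega)\times\Omega$ to be perpendicular to $\gamma$, while $\langle\dot\Omega,\gamma\rangle\gamma$ is parallel to $\gamma$; matching components shows $({\tilde\I}\Omega)\times\Omega = ({\tilde\I}\Omega)\times\Omega$ trivially, so the claim reduces to showing the $\gamma$-parallel part of ${\tilde\I}\dot\Omega$ coincides with $b\langle\dot\Omega,\gamma\rangle\gamma$, which follows by differentiating ${\tilde\I}\Omega = \mu\gamma$ and comparing. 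The main obstacle, therefore, is the bookkeeping of parallel-versus-perpendicular components relative to $\gamma$: one must be careful that $\Omega$ itself need not be parallel to $\gamma$ even though $M$ and ${\tilde\I}\Omega$ are, and that $\mu(t)$ genuinely varies; handling the case $\langle\Omega,\gamma\rangle = 0$ (where some pairings degenerate) requires a short continuity argument, or better, avoiding divisions altogether by working with the identity $({\tilde\I}\Omega)\times\Omega = \mu\,\gamma\times\Omega = \mu\dot\gamma = {\tilde\I}\dot\Omega$, the last equality being exactly what one obtains by differentiating ${\tilde\I}\Omega=\mu\gamma$ and checking that $\dot\mu = 0$, i.e. that $\mu$ is in fact constant along $\mathcal{V}$ (which is plausible since $\mu$ should be expressible through the conserved $f_1$ and $H$).
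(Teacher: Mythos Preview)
Your approach is essentially the same as the paper's, and most of the pieces are correct, but the proof as written has a genuine gap at the very point you yourself flag as ``the main obstacle''.

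First a minor correction: your claim that $\frac{d}{dt}(M\times\gamma)=0$ identically is false. Expanding with the bac--cab rule gives
\[
(M\times\Omega)\times\gamma + M\times(\gamma\times\Omega)
= \gamma\langle M,\Omega\rangle - M\langle\gamma,\Omega\rangle
= (M\times\gamma)\times\Omega,
\]
which is not zero off $\mathcal{V}$. It \emph{does} vanish whenever $M\times\gamma=0$, so invariance of $\mathcal{V}$ still follows; but the terms do not ``cancel pairwise''.

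The real gap is the last step. You correctly reduce everything to
\[
\tilde\I\dot\Omega = (\tilde\I\Omega)\times\Omega + b\,\langle\dot\Omega,\gamma\rangle\,\gamma,
\]
and you correctly see that the proof is finished once $\langle\dot\Omega,\gamma\rangle=0$ (equivalently $\dot\mu=0$). But you never prove this: you say it is ``plausible'' that $\mu$ is expressible through $f_1$ and $H$, and stop. The paper closes this gap in one line. From $M=\nu_0\gamma$ one has
\[
2H=\langle M,\Omega\rangle = \nu_0\,\langle\gamma,\Omega\rangle,
\]
so $\langle\gamma,\Omega\rangle$ is constant (for $\nu_0\neq 0$; the case $\nu_0=0$ gives $M=0$, hence $\Omega=0$, and there is nothing to prove). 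Differentiating and using $\langle\Omega,\dot\gamma\rangle=\langle\Omega,\gamma\times\Omega\rangle=0$ yields $\langle\dot\Omega,\gamma\rangle=0$, and you are done. This also confirms your intuition explicitly: $\mu=\nu_0+b\langle\Omega,\gamma\rangle=\nu_0+2bH/\nu_0$ is indeed a combination of the conserved quantities. Once you insert this single observation, your derivation and the paper's coincide.
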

\begin{proof}
Denote by $(M(t),\gamma(t))$ such solution.  Since $\mathcal{V}$  is invariant then 
 $M(t)=\lambda \gamma(t)$ for a scalar  $\lambda\in \R$ that is necessarily constant since it 
satisfies $f_1(M_0,\gamma_0)=f_1(M(t),\gamma (t))=\lambda^2$. As a consequence we have $2H=\lambda \langle \Omega (t), \gamma (t) \rangle$, which shows
that $\langle \Omega (t), \gamma (t) \rangle$ is also constant along the motion. Therefore we have $\frac{d}{dt} \langle \Omega (t), \gamma (t) \rangle =
 \langle \dot \Omega (t) , \gamma (t) \rangle =0$ and  \eqref{eq:motion-velocity3D} may be rewritten as
 \eqref{eq:vert-3D}.
\end{proof}

\subsection{The $nD$-case}

The discussion above about the vertical and horizontal momentum spaces 
may be generalized to $n$-dimensions by defining  $\mathcal{H}$ and $\mathcal{V}$ as
 the following submanifolds of the phase space $P=\so(n)\times \Ss^{n-1}$: 
 \begin{equation}
 \label{eq:def-H-V}
\mathcal{V} = \{(M,\gamma)\in P \, : \ M\gamma =0 \}, \qquad \mathcal{H} = \{(M,\gamma)\in P \, : \, M-(M\gamma)\wedge \gamma  =0 \}.
\end{equation}
It may be easily checked using the equations of motion \eqref{eq:motion-general} that $\mathcal{V}$ and $\mathcal{H}$
  are indeed invariant under the flow. The conditions to belong to $\mathcal{H}$ and $\mathcal{V}$ are that the 
{\em space} representation of the angular momentum about the contact point  has the respective form:
\begin{equation*}
 \begin{pmatrix} \so(n-1) 
&  {\bf 0}_{(n-1)\times 1}  \\   {\bf 0}_{ 1 \times (n-1)}  & 0 \end{pmatrix}, \qquad \begin{pmatrix} {\bf 0}_{(n-1)\times (n-1)} & \R^{n-1} \\ -(\R^{n-1} )^T  & 0 \end{pmatrix} ,
\end{equation*}
where we have assumed that $e_n=(0,\dots, 0,1)^t$ is the  normal vector to the $(n-1)$-dimensional hyperplane where the rolling takes
place. 
Considering that $\gamma\in \Ss^{n-1}$, it follows that the dimension of  $\mathcal{V}$ is $\frac{n(n-1)}{2}$  whereas that of $\mathcal{H}$ is $2(n-1)$.
In particular, and in contrast  to the case $n=3$,  the set of vertical momentum is much bigger than that of horizontal momentum
if $n$ is large.

As mentioned in the introduction, the only known results of integrability for $n\geq 4$,  were given by Jovanovi\'c \cite{Jovan}. 
His work is concerned only with 
 initial conditions on $\mathcal{H}$, and assumes that the inertia tensor is of a very specific form (see \cite[Eq. (49)]{Jovan}). 
If one requires this inertia tensor to be physical, i.e. satisfies condition~\eqref{eq:Inertia}, this
leads to the  condition that the sphere is axisymmetric which we treat below and  is the main topic of this paper.

\begin{remark} The argument in the proof of Proposition~\ref{P:EulerTop}, that shows that the 3D Chaplygin sphere evolves as
the Euler top with inertia tensor $\tilde \I$ along the vertical space, depends crucially  on the 
assumption that $n=3$ and may not be extended for $n>3$. In fact,  at the end of Section~\ref{S:Axisymmetric} below, we show
 that such simplification is not
possible for $n>3$.
\end{remark}

\section{The axisymmetric Chaplygin sphere}
\label{S:Axisymmetric}

Suppose now that the mass distribution of the sphere is axisymmetric. 
We choose the body frame $\{E_1, \dots, E_n\}$  in such way that  $E_n$ is aligned with the axis of symmetry. 
With the
appropriate  normalization of units this leads to the following condition for the mass matrix:
\begin{equation}
\label{eq:axisymm}
\J =\mbox{Id}_n+a E_nE_n^t,
\end{equation}
for a real parameter $a$ that for  physical reasons is assumed to satisfy $-1\leq a\leq 1$. The case $a=0$ corresponds to a spherical inertia tensor.
We note that in the 3D case, our assumptions imply that the $3\times 3$  inertia matrix is the diagonal matrix with
entries  $(2+a,2+a,2)$.

 With our assumption \eqref{eq:axisymm} we  may write $\I (\Omega) =2\Omega +a (\Omega E_n) \wedge E_n$, and \eqref{eq:M-O}
becomes
\begin{equation}
 \label{eq:M-O-axi-symm}
M=L_\gamma(\Omega)= 2\Omega + a (\Omega E_n) \wedge E_n + b(\Omega \gamma) \wedge \gamma.
\end{equation}
We may also rewrite Eq.~\eqref{eq:motion-velocitynD} as
\begin{equation*}
2\dot \Omega + a(\dot \Omega E_n)\wedge E_n + b (\dot \Omega \gamma )\wedge \gamma = a [ (\Omega E_n)\wedge E_n , \Omega].
\end{equation*}
This equation shows that if the inertia tensor is spherical ($a=0$) then $\dot \Omega =0$ and the angular velocity is constant along the
motion. This observation was already made in~\cite{HGN09}.

 For the rest of the paper we write $x:=\gamma_n=\langle \gamma, E_n \rangle$.
We also denote
\begin{equation}
\label{eq:Delta}
 \Delta(x) = (2+a)(2+b)-abx^2,
\end{equation}
and we note that $\Delta(x)>0$ due to  the restrictions that $x^2\leq 1$ and $a\in [-1,1]$. 

\begin{proposition}
\label{P:Measure}
Under the axisymmetric assumption~\eqref{eq:axisymm} the invariant measure of the multi-dimensional 
Chaplygin sphere is given by
\begin{equation}
\mu = \frac{1}{\Delta(x)^{\frac{n-2}{2}}}\,dM\, d\gamma = \Delta(x)^{\frac{n-2}{2}}\,d\Omega\, d\gamma.
\end{equation}
\end{proposition}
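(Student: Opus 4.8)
The plan is to compute $\det(L_\gamma)$ explicitly under the axisymmetric assumption and then substitute into the Fedorov--Kozlov formula \eqref{eq:invmeasuredet}. From \eqref{eq:M-O-axi-symm}, the operator $L_\gamma : \so(n) \to \so(n)$ acts as
\begin{equation*}
L_\gamma(\Omega) = 2\Omega + a\,(\Omega E_n)\wedge E_n + b\,(\Omega\gamma)\wedge\gamma.
\end{equation*}
The key structural observation is that $L_\gamma = 2\,\mathrm{Id} + (\text{a perturbation of low rank})$: the maps $\Omega \mapsto (\Omega E_n)\wedge E_n$ and $\Omega \mapsto (\Omega\gamma)\wedge\gamma$ each have small rank, since $(\Omega u)\wedge u$ only depends on $\Omega u$ and lands in the subspace of $\so(n)$ of matrices of the form $v\wedge u$. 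Concretely, for a fixed unit vector $u$, the operator $\Omega \mapsto (\Omega u)\wedge u$ is (up to the factor arising from symmetrization) the orthogonal projection onto the $(n-1)$-dimensional subspace $\{v\wedge u : v \perp u\} \subset \so(n)$. Hence $L_\gamma$ differs from $2\,\mathrm{Id}$ only on the span of the two such subspaces attached to $E_n$ and to $\gamma$, which is at most $2(n-1)$-dimensional; on the complementary $\binom{n}{2} - 2(n-1) + 1$-dimensional piece $L_\gamma$ acts as multiplication by $2$.

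The main computation, then, is to diagonalize (or at least compute the determinant of) the restriction of $L_\gamma$ to the relevant low-dimensional invariant subspace. I would decompose $\R^n$ using the plane $\Pi = \mathrm{span}\{E_n,\gamma\}$ and its orthogonal complement $\Pi^\perp$, writing $x = \langle \gamma, E_n\rangle = \cos\theta$. For $v \in \Pi^\perp$, one checks that both $E_n \wedge v$ and $\gamma \wedge v$ lie in a $2$-dimensional space (spanned over the choice of $E_n$ versus $\gamma$), and the restriction of $L_\gamma$ there is a $2\times 2$ matrix depending only on $a$, $b$, $x$; there are $n-2$ such identical blocks (one for each basis vector of $\Pi^\perp$). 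Separately, on $\so(\Pi) \cong \R$ (the single matrix $E_n \wedge \gamma$, when $\theta \neq 0$), $L_\gamma$ acts by another scalar. A direct computation of the $2\times 2$ block should give determinant proportional to $\Delta(x) = (2+a)(2+b) - abx^2$ — this is exactly the expected shape: setting $a=0$ or $b=0$ kills the coupling term, and the quadratic-in-$x$ dependence reflects that only the relative angle between $E_n$ and $\gamma$ matters. Collecting: $\det(L_\gamma) = 2^{\,\binom{n}{2} - 2(n-2) - 2}\cdot(\text{scalar blocks})\cdot \Delta(x)^{n-2}$ up to constants, and after absorbing all $\gamma$-independent constants (which do not affect invariance of the measure) one is left with $\det(L_\gamma) \propto \Delta(x)^{n-2}$, so $\sqrt{\det(L_\gamma)} \propto \Delta(x)^{(n-2)/2}$, giving the claim via \eqref{eq:invmeasuredet}.

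The main obstacle will be bookkeeping the eigenspace decomposition cleanly, especially handling the mild degeneracy at $x^2 = 1$ (where $\gamma = \pm E_n$ and $\Pi$ collapses to a line) and at $x = 0$ (where $\gamma \perp E_n$); one should verify the determinant formula is continuous across these loci, which it manifestly is since $\Delta(x)$ is a polynomial. An alternative, cleaner route that avoids choosing bases is to use the multiplicativity of $\det$ under the Weinstein--Aronszajn / matrix-determinant-lemma style identity: writing the rank-$(n-1)$ perturbations via explicit rectangular factorizations $\Omega \mapsto (\Omega u)\wedge u$, one reduces $\det(L_\gamma)$ to the determinant of a small matrix built from inner products of $E_n$ and $\gamma$, whose only nontrivial entry involves $x^2$. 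I would present whichever is shorter, but in either case the heart of the matter is recognizing the low-rank structure of $L_\gamma - 2\,\mathrm{Id}$ and then extracting the single scalar $\Delta(x)$ that carries all the $\gamma$-dependence.
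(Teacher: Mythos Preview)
Your proposal is correct and follows essentially the same route as the paper: decompose $\R^n$ as $\mathrm{span}\{E_n,\gamma\}\oplus\Pi^\perp$, obtain $n-2$ identical $2\times2$ blocks on $\mathrm{span}\{E_n\wedge w_j,\gamma\wedge w_j\}$ with determinant $\Delta(x)$, a scalar block on $\mathrm{span}\{E_n\wedge\gamma\}$, and multiplication by $2$ on $\bigwedge^2\Pi^\perp$, then invoke continuity for $\gamma\parallel E_n$. The only cosmetic slip is the exponent of $2$ in your final line (it should be $\binom{n-2}{2}$, i.e.\ $\binom{n}{2}-2(n-2)-1$ rather than $-2$), which is irrelevant since only the $\gamma$-dependent factor matters.
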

\begin{proof}
We will prove that $\det(L_\gamma)$ is proportional to $\Delta(x)^{n-2}$. The result then follows from the 
formula~\eqref{eq:invmeasuredet} for the invariant measure given by Fedorov and Kozlov~\cite{FedKoz}.

Suppose that $\gamma$ and $E_n$ are linearly independent and let $\{\gamma, E_n, w_1, \dots , w_{n-2}\}$ be a basis
of $\R^n$ with the property that $\langle w_j, \gamma\rangle = \langle w_j, E_n \rangle =0$ for all $j$. Using 
the expression \eqref{eq:M-O-axi-symm} for the for the linear operator  $L_\gamma:\so(n)\to \so(n)$, one computes
\begin{equation*}
\begin{split}
& L_\gamma( E_n \wedge w_j) = (2+a) E_n \wedge w_j + bx \gamma \wedge w_j, \qquad
 L_\gamma( \gamma \wedge w_j) =ax E_n \wedge w_j + (2+b) \gamma \wedge w_j, \\
 &L_\gamma( E_n \wedge \gamma ) = (2+a+b)E_n \wedge \gamma , \qquad  \qquad \qquad L_\gamma( w_i \wedge w_j) =2w_i \wedge w_j. \\ 
\end{split}
\end{equation*}
Therefore, the matrix representation of $L_\gamma$ with respect to the ordered basis 
\begin{equation*}
 \; E_n \wedge w_1, \; \gamma \wedge w_1,  \; \dots \; , E_n \wedge w_{n-2}, \; \gamma \wedge w_{n-2}, \;E_n \wedge \gamma, \;
w_i\wedge w_j,
\end{equation*}
of $\so(n)$, is given in block-diagonal form as
\begin{equation*}
\begin{pmatrix} C_1 & 0 & 0 &  \cdots & 0 \\ 0 & \ddots  & 0 & \cdots  & \vdots \\ 0 & \cdots & C_{n-2} &0 & 0 \\  0 & \cdots & 0 & 2+a+b & 0 & \\
0 & \cdots & 0 & 0 & 2 \mbox{Id}_{(n-2)(n-3)/2}
\end{pmatrix} ,
\end{equation*}
where the $2\times 2$ blocks $C_j$, $j=1,\dots , n-2$,  are all identical and equal to 
\begin{equation*}
C=\begin{pmatrix} 2+a  & ax \\ bx & 2+b \end{pmatrix}.
\end{equation*}
Given that $\det(C)=\Delta(x)$, it follows that, up to the constant factor $(2+a+b)2^{(n-2)(n-3)/2}$, the determinant of $L_\gamma$ equals $\Delta(x)^{n-2}$. The proof for configurations where $\gamma$ and $E_n$ are parallel follows by continuity.
\end{proof}

The following proposition gives
the explicit form of $L_\gamma^{-1}$ under our axisymmetric assumption~\eqref{eq:axisymm}.

\begin{proposition}
\label{P:inversion}
Equation   \eqref{eq:M-O-axi-symm} may be inverted to express  $\Omega$ in terms of  $M$ and $\gamma$
 as
\begin{equation}
\Omega =L_\gamma^{-1}(M)=  \frac{1}{2}M + \frac{1}{2\Delta(x)}K, 
\end{equation}
where 
\begin{equation}
\begin{split}
K= & abx (M\gamma) \wedge E_n -a(2+b)(ME_n)\wedge E_n  + abx(ME_n)\wedge \gamma
-b(2+a) (M\gamma) \wedge \gamma   \\ 
& \qquad \qquad -\frac{ab(4+a+b) \langle M\gamma ,E_n \rangle }{2+a+b} \gamma \wedge E_n.
\end{split}
\end{equation}
\end{proposition}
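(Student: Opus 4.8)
The plan is to invert the linear map $L_\gamma$ explicitly by working in the block decomposition already exhibited in the proof of Proposition~\ref{P:Measure}. Assume first that $\gamma$ and $E_n$ are linearly independent and complete $\{\gamma,E_n\}$ to a basis with $w_1,\dots,w_{n-2}$ orthogonal to both. On the $2\times 2$ blocks spanned by $\{E_n\wedge w_j,\gamma\wedge w_j\}$ the operator $L_\gamma$ acts by the fixed matrix $C=\begin{pmatrix}2+a & ax\\ bx & 2+b\end{pmatrix}$, so there $L_\gamma^{-1}$ acts by $C^{-1}=\frac{1}{\Delta(x)}\begin{pmatrix}2+b & -ax\\ -bx & 2+a\end{pmatrix}$; on $E_n\wedge\gamma$ it is multiplication by $2+a+b$, hence $L_\gamma^{-1}$ is $\frac{1}{2+a+b}$; on the $w_i\wedge w_j$ it is multiplication by $2$. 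The claimed formula $\Omega=\frac12 M+\frac{1}{2\Delta(x)}K$ is simply the statement that $L_\gamma^{-1}-\frac12\mathrm{Id}$ has a particularly clean expression, so I would verify that $K=2\Delta(x)\bigl(L_\gamma^{-1}-\tfrac12\mathrm{Id}\bigr)(M)$ block by block. Since $\Delta(x)=\det C$, on the $w_i\wedge w_j$ subspace $L_\gamma^{-1}-\frac12\mathrm{Id}$ vanishes, consistent with $K$ containing only terms built from $E_n$, $\gamma$, $M\gamma$, $ME_n$; this immediately pins down the geometric shape of $K$.

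Concretely, I would decompose an arbitrary $M\in\so(n)$ along the basis above. The key observation is that the components of $M$ in the ``mixed'' blocks are governed by the two vectors $M\gamma$ and $ME_n$: projecting $M$ onto $\mathrm{span}\{E_n\wedge w_j,\gamma\wedge w_j : j\}$ and onto $E_n\wedge\gamma$ can be written intrinsically using $M\gamma$, $ME_n$ and their pairings $\langle M\gamma,E_n\rangle$, $\langle ME_n,\gamma\rangle=-\langle M\gamma,E_n\rangle$, $\langle M\gamma,\gamma\rangle=\langle ME_n,E_n\rangle=0$. Applying $C^{-1}-\frac12\mathrm{Id}=\frac{1}{2\Delta(x)}\begin{pmatrix} -abx^2 & -2ax\\ -2bx & -abx^2\end{pmatrix}$ (using $2(2+b)-\Delta=-abx^2$ etc.) to the mixed block, and $\frac{1}{2+a+b}-\frac12=\frac{-(a+b)}{2(2+a+b)}$ to the $E_n\wedge\gamma$ component, then reassembling, should reproduce exactly the five terms of $K$, including the last term whose coefficient $\frac{ab(4+a+b)}{2+a+b}$ I expect to emerge from combining the contribution of the mixed block (which also feeds into the $\gamma\wedge E_n$ direction through the non-orthonormality of $\gamma$ and $E_n$) with the $E_n\wedge\gamma$ block. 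The appearance of $x^2$ inside $\Delta(x)$ and the cross term $\langle M\gamma,E_n\rangle$ is the bookkeeping one has to be careful with, since $\gamma$ and $E_n$ are not orthogonal.

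The main obstacle is precisely this non-orthogonality: the natural basis $\{\gamma,E_n,w_1,\dots,w_{n-2}\}$ is not orthonormal, so expressing the coordinate projections of $M$ intrinsically (and hence expressing $K$ in a frame-independent way) requires inverting the Gram matrix of $\{\gamma,E_n\}$, namely $\begin{pmatrix}1 & x\\ x & 1\end{pmatrix}^{-1}=\frac{1}{1-x^2}\begin{pmatrix}1 & -x\\ -x & 1\end{pmatrix}$, and tracking the factors of $1-x^2$ until they cancel against corresponding factors hidden in $\Delta(x)$ and in the block structure. An alternative, and perhaps cleaner, route is to bypass the change of basis entirely: simply \emph{guess} the formula for $\Omega$ as stated (or rather, derive its shape from the above reasoning) and then verify it directly by substituting $\Omega=\frac12 M+\frac{1}{2\Delta(x)}K$ into $L_\gamma(\Omega)=2\Omega+a(\Omega E_n)\wedge E_n+b(\Omega\gamma)\wedge\gamma$ and checking that the right-hand side collapses to $M$. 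This reduces everything to manipulating wedge-product identities such as $(u\wedge v)w=\langle v,w\rangle u-\langle u,w\rangle v$, together with $\|\gamma\|=\|E_n\|=1$ and $\langle\gamma,E_n\rangle=x$; it is a finite, if somewhat lengthy, computation, and I would present only the verification, noting at the end that the case $\gamma\parallel E_n$ (where $x=\pm1$ and $\Delta=(2+a)(2+b)-ab=4+2a+2b+ab=(2+a+b)+(2+ab)$, still positive) follows by continuity, exactly as in Proposition~\ref{P:Measure}.
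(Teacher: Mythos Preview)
Your proposal is sound, but the route differs from the paper's. You outline two strategies: (a) invert $L_\gamma$ blockwise using the decomposition from Proposition~\ref{P:Measure} and then reassemble intrinsically, or (b) simply verify the stated formula by substituting into $L_\gamma(\Omega)$ and collapsing via the identity $(u\wedge v)w=\langle v,w\rangle u-\langle u,w\rangle v$. Both work; you correctly flag the non-orthogonality of $\{\gamma,E_n\}$ as the bookkeeping hazard in (a), and you sensibly propose presenting only (b). One small slip in your sketch of (a): the diagonal entries of $2\Delta(x)\bigl(C^{-1}-\tfrac12\mathrm{Id}\bigr)$ are not $-abx^2$; for instance $2(2+b)-\Delta(x)=-2a-ab(1-x^2)$, so carrying (a) through would be messier than you indicate.

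The paper takes a third, more direct constructive route that sidesteps both the non-orthogonality bookkeeping and the need for a guessed formula. Rather than decomposing $M$ in the basis of Proposition~\ref{P:Measure}, it targets exactly the two unknowns $(\Omega E_n)\wedge E_n$ and $(\Omega\gamma)\wedge\gamma$ needed to invert~\eqref{eq:M-O-axi-symm}: multiplying~\eqref{eq:M-O-axi-symm} on the right by $E_n$ (resp.\ $\gamma$) and then wedging with $E_n$ produces a $2\times2$ linear system for $(\Omega E_n)\wedge E_n$ and $(\Omega\gamma)\wedge E_n$ with matrix $C$ (after first extracting $\langle\Omega\gamma,E_n\rangle=\langle M\gamma,E_n\rangle/(2+a+b)$); an analogous wedge with $\gamma$ yields $(\Omega\gamma)\wedge\gamma$. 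This derives the formula rather than verifying it, yet avoids ever writing $M$ in coordinates or inverting the Gram matrix of $\{\gamma,E_n\}$. Your verification approach (b) is arguably shorter to write but less illuminating as to where the coefficient $\frac{ab(4+a+b)}{2+a+b}$ comes from; the paper's method makes that coefficient fall out of solving two $2\times2$ systems with determinant $\Delta(x)$.
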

\begin{proof}
We will obtain expressions for $(\Omega E_n) \wedge E_n$ and  $(\Omega \gamma) \wedge \gamma$
in terms of $M$, and $\gamma$. Once  this is done,  the inversion of \eqref{eq:M-O-axi-symm} is trivial.

First note that, thanks to the skew-symmetry of $\Omega$, multiplying \eqref{eq:M-O-axi-symm} on the left by $\gamma^t$ and on the right by
$E_n$  leads to 
\begin{equation}
\label{eq:aux-inversion}
\langle \Omega \gamma, E_n \rangle = \frac{\langle M\gamma, E_n \rangle}{2+a+b}.
\end{equation}
Next, multiplying  \eqref{eq:M-O-axi-symm} on the right by $E_n$ (respectively $\gamma$) and taking
the exterior product of the resulting expression with $E_n$ leads to the system of two linear equations
for the unknowns $U:=(\Omega E_n)\wedge E_n$ and $Z:=(\Omega \gamma) \wedge E_n$:
\begin{equation*}
(2+a)U +bxZ = (ME_n) \wedge E_n + b \frac{\langle M\gamma, E_n \rangle}{2+a+b} \gamma \wedge E_n, \qquad
ax U + (2+b) Z = (M\gamma ) \wedge E_n,
\end{equation*}
where we have made use of the expression for $(\Omega \gamma, E_n)$ given above.
 The determinant of this linear system  is precisely $\Delta(x)$ in~\eqref{eq:Delta} and its unique solution gives
 \begin{equation*}
(\Omega E_n)\wedge E_n=U=  \frac{1}{\Delta (x)} \left ( (2+b) (ME_n)\wedge E_n -bx (M\gamma) \wedge E_n + \frac{b(2+b)\langle M\gamma , E_n\rangle}{2+a+b} \gamma \wedge E_n \right ).
\end{equation*}

Proceeding in  a completely analogous manner, multiplying  \eqref{eq:M-O-axi-symm} on the right by $E_n$ (respectively $\gamma$) and taking
the exterior product of the resulting expression with $\gamma$ one obtains a linear system
of two equations for the unknowns $(\Omega \gamma)\wedge \gamma$ and $(\Omega E_n) \wedge \gamma$, whose solution gives
\begin{equation*}
(\Omega \gamma) \wedge \gamma=   \frac{1}{\Delta (x)} \left ( -ax (ME_n)\wedge \gamma +(2+a) (M\gamma) \wedge \gamma + \frac{a(2+a)\langle M\gamma , E_n \rangle}{2+a+b} \gamma \wedge E_n \right ).
\end{equation*}
The proof of the proposition follows by inserting the above expressions in \eqref{eq:M-O-axi-symm}.
\end{proof}

\subsection{The additional $\SO(n-1)$ symmetry} 
\label{SS:extra-symmetry}
Our assumption that the body is axisymmetric leads to a  symmetry of the equations that we now describe. We
shall write
\begin{equation}
\label{eq:SOnminus1}
\SO(n-1)= \{ h \in \SO(n) \, : \,  hE_n=E_n\}.
\end{equation}
Then $\SO(n-1)$ acts on the phase space $P=\so(n)\times \Ss^{n-1}\ni (M,\gamma)$ by
\begin{equation}
\label{eq:action}
h\cdot (M, \gamma) = (hMh^{-1}, h\gamma).
\end{equation}
Using the condition that $ hE_n=E_n$ one may use Proposition~\ref{P:inversion} to check that $\Omega$ is mapped into  $h\Omega h^{-1}$ by the action of $h\in \SO(n-1)$.
 Hence, taking into account the equivariance of 
  the matrix commutator with respect to conjugation, 
  it is straightforward to check that the system~\eqref{eq:motion-general} is
  $\SO(n-1)$-equivariant and the dynamics may be reduced to $P/\SO(n-1)$.

\subsection{The solution in the case of vertical momentum}

In this section we show that for an axisymmetric sphere of arbitrary dimension $n$, 
the system may be explicitly integrated for 
initial conditions on the space of vertical angular momentum $\mathcal{V}$ defined by~\eqref{eq:def-H-V}.
These solutions are quasi-periodic in $P=\so(n)\times \Ss^{n-1}$ and are relative equilibria of the $\SO(n-1)$ action described above.

For our purposes it is  convenient to note that the vertical space   $\mathcal{V}$ defined by~\eqref{eq:def-H-V} may be
equivalently described in terms of $\gamma$ and $\Omega$ as:
\begin{equation}
\label{eq:V-velocity}
\mathcal{V} = \{ (\Omega, \gamma) \in \so(n) \times \Ss^{n-1}\, : \, (2+b)\Omega \gamma + a x \Omega E_n =0 \quad \mbox{and} \quad
\langle \Omega \gamma, E_n \rangle=0 \}.
\end{equation}
This may be checked by multiplying Eq. \eqref{eq:M-O-axi-symm} on the right by $\gamma$ and using~\eqref{eq:aux-inversion} to conclude that 
$\langle \Omega \gamma, E_n \rangle=0$ along $\mathcal{V}$.

\subsection*{The 3D case}
Let us first consider the case $n=3$.
Under our assumptions on the inertia tensor, we have $$\tilde \I=\mbox{diag}(2+a+b,2+a+b, 2+b),$$
and \eqref{eq:vert-3D} yields
\begin{equation*}
\dot \Omega_1 = \frac{a\Omega_2\Omega_3}{2+a+b}, \qquad 
\dot \Omega_2 = - \frac{a\Omega_1\Omega_3}{2+a+b}, \qquad
\dot \Omega_3=0.
\end{equation*}
The  solution of this system with initial condition $\Omega(0)= \Omega_0\in \R^3$ is easily obtained
 in terms of sines
and cosines.
In order to compare with the $n$D case ahead, we write it in matrix form as
\begin{equation*}
\hat \Omega (t) = \exp (\zeta t)  \hat \Omega_0\exp (-\zeta t), 
\end{equation*}
where
\begin{equation}
\label{eq:xiV3D}
 \zeta = \frac{-a}{2+a+b}  \begin{pmatrix} 
0 & - \bar \Omega_3 &0 \\
\bar \Omega_3& 0 &0\\ 
0 & 0  & 0
\end{pmatrix} =  \frac{-a}{2+a+b}\left ( \hat \Omega_0  -   (\hat \Omega_0 E_3)\wedge E_3 \right ).
\end{equation}
Here $\bar \Omega_3$ denotes the third component of $\Omega_0$ and 
$\; \hat{} : \R^3 \to \so(3)$ is the `hat map' (see e.g.~\cite{MaRa}) which is the well-known Lie algebra 
isomorphism determined by the condition that
$\hat u v=u\times v$ for all $u,v\in \R^3$. 

We now claim that the solution of the Poisson equation $\dot \gamma = \gamma \times \Omega(t)$  is
\begin{equation*}
 \gamma(t)=  \exp (\zeta t)\gamma_0,
\end{equation*}
where $\gamma_0$ is the initial condition. This is easy to prove after noting that, because our initial condition lies
in  $\mathcal{V}$, 
the  matrix $\zeta$ in \eqref{eq:xiV3D} satisfies
\begin{equation*}
\zeta \gamma_0 = -\hat \Omega_0  \gamma_0 = \gamma_0 \times \Omega_0.
\end{equation*}
Indeed, this may be checked by using  the 3D version of~\eqref{eq:V-velocity}. Finally, in view of~\eqref{eq:M-O-axi-symm}
and since $\zeta E_3=0$, we have
\begin{equation*}
\hat M (t) = \exp (\zeta t)  \hat M_0\exp (-\zeta t),
\end{equation*}
where $M_0$ is the initial condition.

The above 
  expressions together with the definition of the $\SO(2)$ action defined by~\eqref{eq:SOnminus1} 
and \eqref{eq:action}, imply  
that the solutions along  $\mathcal{V}$ are contained in the orbits of the  $\SO(2)$ action. In other words, they
are \defn{relative equilibria}.

\subsection*{The $n$D case}

We now generalize the discussion above for general $n$ by showing that the dynamics along $ \mathcal{V}$ consists of relative equilibria with 
respect to the  $\SO(n-1)$ action defined by~\eqref{eq:SOnminus1} 
and~\eqref{eq:action}.
As we shall see,  for $n\geq 4$ the expression for the `velocity' $\zeta$ of the relative equilibria
 is more intricate than  \eqref{eq:xiV3D}. Since $\SO(n-1)$  is compact, the corresponding solutions on $P$ are 
quasi-periodic on tori whose generic dimension is $\mbox{rank}( \SO(n-1))= [\frac{n-1}{2} ]$.

\begin{theorem}
\label{Th:Re}
Under the axisymmetric assumption~\eqref{eq:axisymm} on the inertia tensor,
the solution of the Chaplygin sphere equations \eqref{eq:motion-general} with initial condition  $(M_0,\gamma_0)\in \mathcal{V}$ is quasi-periodic and given by
\begin{equation*}
M(t) = \exp(\zeta t) M_0 \exp(-\zeta t), \qquad \gamma(t)=  \exp(\zeta t) \gamma_0,
\end{equation*}
where
\begin{equation*}
\zeta = -\frac{a(2+b(1-x_0^2))}{2\Delta (x_0)}M_0  + \frac{1}{ 2\Delta(x_0)} \left ( a(2+b) (M_0 E_n) \wedge E_n
-abx_0 (M_0E_n) \wedge \gamma_0 \right ).
\end{equation*}
Here  $x_0=\langle E_n,\gamma_0\rangle$ denotes 
the $n^{th}$ component of $\gamma_0$. In particular $x(t)$   is constant
and equal to $x_0$ along the motion and the dynamics along $ \mathcal{V}$ consists of relative equilibria with 
respect to the  $\SO(n-1)$ action defined by~\eqref{eq:SOnminus1} 
and~\eqref{eq:action}. 
\end{theorem}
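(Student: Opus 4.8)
The plan is to verify directly that the proposed ansatz $M(t) = \exp(\zeta t) M_0 \exp(-\zeta t)$, $\gamma(t) = \exp(\zeta t)\gamma_0$ satisfies the equations of motion \eqref{eq:motion-general}, once I have identified the correct constant matrix $\zeta \in \so(n)$. The key structural idea is that a solution of this form is a relative equilibrium: differentiating gives $\dot M = [\zeta, M]$ and $\dot\gamma = \zeta\gamma$, so I need $[\zeta, M(t)] = [M(t),\Omega(t)]$ and $\zeta\gamma(t) = -\Omega(t)\gamma(t)$ for all $t$. By $\SO(n-1)$-equivariance (Section \ref{SS:extra-symmetry}) and the fact that $\zeta$ will be built $\SO(n-1)$-equivariantly from $(M_0,\gamma_0)$, it suffices to check both identities at $t=0$, i.e. to show that $\Omega_0 := L_{\gamma_0}^{-1}(M_0)$ satisfies $\zeta + \Omega_0$ commutes with $M_0$ and that $(\zeta + \Omega_0)\gamma_0 = 0$. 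Actually the cleanest route is: first show $\zeta\gamma_0 = -\Omega_0\gamma_0$, which forces $x(t)$ constant; then show $[\zeta + \Omega_0, M_0] = 0$, which upgrades the ansatz to a genuine solution; then invoke uniqueness of solutions of the ODE.

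First I would use the characterization \eqref{eq:V-velocity} of $\mathcal{V}$ in terms of $\Omega$ and $\gamma$: along $\mathcal{V}$ we have $(2+b)\Omega_0\gamma_0 + a x_0\,\Omega_0 E_n = 0$ and $\langle \Omega_0\gamma_0, E_n\rangle = 0$. Since $(M_0,\gamma_0) \in \mathcal V$ means $M_0\gamma_0 = 0$, Proposition \ref{P:inversion} lets me compute $\Omega_0 = L_{\gamma_0}^{-1}(M_0)$ explicitly: setting $M_0\gamma_0 = 0$ kills several terms of $K$ and yields $\Omega_0 = \tfrac12 M_0 + \tfrac{1}{2\Delta(x_0)}\big(-a(2+b)(M_0E_n)\wedge E_n + abx_0 (M_0 E_n)\wedge\gamma_0\big)$. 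Comparing with the stated $\zeta$, one sees $\zeta = -\Omega_0 + (\text{something proportional to }M_0)$; more precisely the claim will reduce to the algebraic identity $\zeta + \Omega_0 = \lambda M_0$ for the scalar $\lambda = \tfrac12\big(1 - \tfrac{a(2+b(1-x_0^2))}{\Delta(x_0)}\big) = \tfrac{b(2+a)(1-x_0^2) + \,\text{const}}{\,\ldots}$ — this is a short computation with $\Delta(x_0) = (2+a)(2+b) - abx_0^2$. Granting this, $(\zeta+\Omega_0)\gamma_0 = \lambda M_0\gamma_0 = 0$, which gives $\zeta\gamma_0 = -\Omega_0\gamma_0$ and hence $\dot\gamma(0) = \zeta\gamma_0 = -\Omega_0\gamma_0$; and $[\zeta+\Omega_0, M_0] = \lambda[M_0,M_0] = 0$, so $[\zeta,M_0] = [M_0,\Omega_0]$, which is $\dot M(0)$. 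Propagating by equivariance (or by noting $\Omega(t) = \exp(\zeta t)\Omega_0\exp(-\zeta t)$, which follows from Proposition \ref{P:inversion} and $\zeta E_n$, $\zeta\gamma_0$ behaving correctly), both equations of motion hold for all $t$, and uniqueness finishes it. That $x(t) = \langle E_n, \gamma(t)\rangle = \langle E_n, \exp(\zeta t)\gamma_0\rangle$ is constant follows once I check $\zeta^t E_n = -\zeta E_n$ is orthogonal to... more directly: $\dot x = \langle E_n, \zeta\gamma\rangle = -\langle \zeta E_n, \gamma\rangle$, and I show $\zeta E_n = 0$ by inspecting the three terms of $\zeta$ — each of $M_0$ applied to $E_n$... no: rather $\zeta E_n$ is a combination of $M_0 E_n$, $(M_0E_n)\wedge E_n$ applied to... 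I need $\zeta E_n = 0$, which should come out of the $\mathcal V$-conditions combined with $M_0\gamma_0=0$; alternatively $\frac{d}{dt}x = -\langle\Omega\gamma,E_n\rangle \cdot(\ldots)$ using $\dot\gamma=-\Omega\gamma$ and the second condition in \eqref{eq:V-velocity}, giving $\dot x = -\langle\Omega\gamma,E_n\rangle=0$ instantly. Finally, since $\exp(\zeta t)$ is a one-parameter subgroup of $\SO(n)$ fixing $E_n$ (as $\zeta E_n = 0$ and $\zeta$ skew), it lies in the compact group $\SO(n-1)$, whose maximal torus has dimension $[\tfrac{n-1}{2}]$, so the closure of the orbit is a torus and the motion is quasi-periodic; and the whole trajectory lies in a single $\SO(n-1)$-orbit, i.e. it is a relative equilibrium.

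The main obstacle I anticipate is the bookkeeping needed to show $\zeta + \Omega_0 = \lambda M_0$ as matrices: this requires rewriting $(M_0\gamma_0)$-free consequences of the $\mathcal V$-conditions — in particular expressing $(M_0 E_n)\wedge E_n$, $(M_0E_n)\wedge\gamma_0$, and $M_0$ in a way that the wedge terms in $\zeta$ and in $\Omega_0$ cancel exactly, leaving only a multiple of $M_0$. The subtlety is that $M_0\gamma_0 = 0$ does not imply $M_0 E_n = 0$, so the term $abx_0(M_0E_n)\wedge\gamma_0$ genuinely appears; one must be careful that it is present in both $\zeta$ and $-\Omega_0$ with matching coefficient $\tfrac{abx_0}{2\Delta(x_0)}$ (it is, from the formula above), and similarly for $(M_0E_n)\wedge E_n$ with coefficient $\tfrac{a(2+b)}{2\Delta(x_0)}$. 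So the cancellation is clean and $\lambda$ is just $\tfrac12$ minus the coefficient of $M_0$ in $\zeta$. The only remaining care is the quasi-periodicity conclusion, where I should remark that $\exp(\zeta t)$ need not itself be periodic but its closure in $\SO(n-1)$ is a torus $\T^k$ with $k \le [\tfrac{n-1}{2}]$, so the solution, being the image of a linear flow on that torus, is quasi-periodic.
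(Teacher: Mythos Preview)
Your proposal is correct and follows essentially the same route as the paper: compute $\Omega_0$ from Proposition~\ref{P:inversion} using $M_0\gamma_0=0$, observe that $\zeta+\Omega_0$ is a scalar multiple of $M_0$ (the scalar works out to $\lambda=(2+b)/\Delta(x_0)$, matching the paper's identity $\Omega_0=-\zeta+\tfrac{2+b}{\Delta(x_0)}M_0$), deduce $[\zeta,M_0]=[M_0,\Omega_0]$ and $\zeta\gamma_0=-\Omega_0\gamma_0$, check $\zeta E_n=0$ so $\exp(\zeta t)\in\SO(n-1)$, and invoke equivariance. One small clarification: $\zeta E_n=0$ is a purely algebraic consequence of the formula for $\zeta$ (the three terms applied to $E_n$ give multiples of $M_0E_n$ with coefficients $-a(2+b(1-x_0^2))$, $a(2+b)$, $-abx_0^2$ summing to zero), so you do not need the $\mathcal{V}$-conditions for that step.
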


\begin{proof}
That $x(t)$ is constant follows by direct differentiation of $x=\langle \gamma, E_n \rangle$ and the fact that $\langle \Omega \gamma, E_n \rangle =0$ on $\mathcal{V}$.
 
Next note that a direct calculation gives $\zeta E_n=0$, which implies that $\zeta$ belongs to the Lie algebra of the 
group $\SO(n-1)$ defined by~\eqref{eq:SOnminus1}.
On the other hand, 
the condition $M_0\gamma_0=0$ together with  Proposition~\ref{P:inversion}
shows that the initial angular velocity $\Omega_0$ satisfies
\begin{equation*}
\Omega_0 = -\zeta + \frac{(2+b)}{\Delta (x_0)}M_0.
\end{equation*}
Therefore 
\begin{equation*}
[M_0, \Omega_0]= [\zeta ,M_0], \qquad \mbox{and} \qquad  \zeta \gamma_0 = -\Omega_0 \gamma_0.
\end{equation*}
This shows that the velocity vector $(\dot M(0), \dot \gamma(0))$ of the curve $(M(t),\gamma(t))$  in the statement of the
theorem coincides with the vector field  defined by  the Chaplygin sphere equations \eqref{eq:motion-general}
at the initial condition $(M_0,\gamma_0)$. Considering that these equations are equivariant and that 
$(M(t),\gamma(t))$ coincides with the 
 $\SO(n-1)$ action restricted to the 1-parameter subgroup $t\mapsto \exp(\zeta t)$ of
 $\SO(n-1)$, it follows that $(M(t),\gamma(t))$ is an integral curve of \eqref{eq:motion-general} (see e.g. \cite[Lemma 4.2.1.1]{CDS-book}).
\end{proof}

\begin{corollary}
Under the axisymmetric assumption~\eqref{eq:axisymm} on the inertia tensor and for initial conditions on $ \mathcal{V}$,
the angular velocity $\Omega(t)$   is also quasi-periodic and given by
\begin{equation*}
\Omega(t) = \exp(\zeta t) \Omega_0 \exp(-\zeta t), 
\end{equation*}
where $\Omega_0$ is the initial angular velocity.
\end{corollary}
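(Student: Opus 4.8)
The plan is to read the corollary off Theorem~\ref{Th:Re} together with the $\SO(n-1)$-equivariance of the inversion map $L_\gamma^{-1}$ that was recorded in Section~\ref{SS:extra-symmetry}. The point is that $\Omega(t) = L_{\gamma(t)}^{-1}(M(t))$ is a \emph{pointwise} algebraic function of the solution $(M(t),\gamma(t))$, and the explicit solution in Theorem~\ref{Th:Re} is obtained by acting on the initial datum with the one-parameter subgroup $t\mapsto\exp(\zeta t)$; since that subgroup lies inside $\SO(n-1)$ and $L_\gamma^{-1}$ intertwines the $\SO(n-1)$-actions, the angular velocity must transform by the same conjugation.

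First I would note that $\zeta$ belongs to the Lie algebra of $\SO(n-1)$ defined in \eqref{eq:SOnminus1}: this is exactly the identity $\zeta E_n=0$ established in the proof of Theorem~\ref{Th:Re}, whence $\exp(\zeta t) E_n = E_n$ and $h_t:=\exp(\zeta t)\in\SO(n-1)$ for all $t$. Next I would make precise the equivariance already observed in Section~\ref{SS:extra-symmetry}. By Proposition~\ref{P:inversion}, $L_\gamma^{-1}(M)=\tfrac12 M + \tfrac{1}{2\Delta(x)}K$ is assembled only from $M$, $\gamma$, $E_n$ via wedge products, scalar multiplication, and the scalar $x=\langle\gamma,E_n\rangle$; using that for $h\in\SO(n-1)$ one has $hE_n=E_n$, $(hu)\wedge(hv)=h(u\wedge v)h^{-1}$, and $\langle h\gamma,E_n\rangle=\langle\gamma,h^{-1}E_n\rangle=x$, substituting $M\mapsto hMh^{-1}$, $\gamma\mapsto h\gamma$ turns $K$ into $hKh^{-1}$ and $\Delta(x)$ is unchanged, so
\[
L_{h\gamma}^{-1}(hMh^{-1}) \;=\; h\,L_\gamma^{-1}(M)\,h^{-1}, \qquad h\in\SO(n-1).
\]

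With these two facts the conclusion is immediate: inserting the solution $M(t)=h_t M_0 h_t^{-1}$, $\gamma(t)=h_t\gamma_0$ of Theorem~\ref{Th:Re} into $\Omega(t)=L_{\gamma(t)}^{-1}(M(t))$ and applying the displayed equivariance with $h=h_t$ gives
\[
\Omega(t) \;=\; L_{h_t\gamma_0}^{-1}(h_t M_0 h_t^{-1}) \;=\; h_t\,L_{\gamma_0}^{-1}(M_0)\,h_t^{-1} \;=\; \exp(\zeta t)\,\Omega_0\,\exp(-\zeta t),
\]
and quasi-periodicity of $\Omega(t)$ follows from that of $t\mapsto\exp(\zeta t)$ in the compact group $\SO(n-1)$, on tori of generic dimension $[\tfrac{n-1}{2}]$. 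I do not expect any genuine obstacle here; the only step needing a line of verification is the equivariance of $L_\gamma^{-1}$, which is forced by the structure of Proposition~\ref{P:inversion}. As a cross-check one may argue directly from $\mathcal{V}$: since $\mathcal{V}$ is flow-invariant and $x(t)\equiv x_0$, the relation $\Omega_0=-\zeta+\tfrac{2+b}{\Delta(x_0)}M_0$ derived in the proof of Theorem~\ref{Th:Re} persists along the solution with $M_0$ replaced by $M(t)$ and $\zeta$ unchanged (the formula for $\zeta$ is conjugation-equivariant for the same reason as above, and $\exp(\zeta t)$ commutes with $\zeta$), and conjugating this relation by $\exp(\zeta t)$ again yields $\Omega(t)=\exp(\zeta t)\Omega_0\exp(-\zeta t)$.
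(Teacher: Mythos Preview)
Your proposal is correct and is essentially the paper's own argument, only with the equivariance of $L_\gamma^{-1}$ spelled out in detail; the paper's proof is the single sentence that the action~\eqref{eq:action} maps $\Omega\mapsto h\Omega h^{-1}$ (this fact having been recorded in Section~\ref{SS:extra-symmetry}), so the corollary follows immediately from Theorem~\ref{Th:Re}. Your cross-check via the relation $\Omega_0=-\zeta+\tfrac{2+b}{\Delta(x_0)}M_0$ is a nice redundant verification but is not needed.
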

\begin{proof}
This follows from the above theorem by noting that the action~\eqref{eq:action} maps $\Omega \mapsto h\Omega h^{-1}$.
\end{proof}

The matrix $\zeta$  in the statement of Theorem~\ref{Th:Re} may be written in terms of the initial velocity 
$\Omega_0$ and 
$\gamma_0$. A direct calculation using \eqref{eq:M-O-axi-symm} and the condition $0=M_0\gamma_0=\Omega_0((2+b)\gamma_0+ax_0E_n)$, that is valid on $\mathcal{V}$, 
gives
\begin{equation}
\label{eq:xiVnD}
\zeta=-\frac{a(2+b(1-x_0^2))}{\Delta (x_0)}(\Omega_0-(\Omega_0E_n)\wedge E_n) 
-\frac{abx_0}{\Delta(x_0)}(\Omega_0E_n)\wedge 
(\gamma_0-x_0E_n).
\end{equation}

The following  proposition shows that our treatment of the $n$-dimensional case is consistent with the results that
were obtained above for $n=3$.

\begin{proposition}
\label{P:n3special}
When $n=3$ the expressions for $\zeta$ in Eq.~\eqref{eq:xiVnD} and Eq.~\eqref{eq:xiV3D} coincide.
\end{proposition}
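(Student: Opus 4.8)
The plan is to specialize the general $n$-dimensional expression~\eqref{eq:xiVnD} to $n=3$ and match it with~\eqref{eq:xiV3D} via the hat map. First I would substitute $n=3$ and $E_3$ for $E_n$ in~\eqref{eq:xiVnD}, and then translate each term through the isomorphism $\;\hat{}\;:\R^3\to\so(3)$. Recall that under this map, for $u,v\in\R^3$ one has $\widehat{u\times v} = u\wedge v$ up to sign (indeed $u\wedge v = uv^t - vu^t$ acts on $w$ as $u\langle v,w\rangle - v\langle u,w\rangle = -w\times(u\times v)$, so $u\wedge v = -\widehat{u\times v} = \widehat{v\times u}$), so each wedge term in~\eqref{eq:xiVnD} corresponds to a cross-product term in $\R^3$.

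Next I would use the defining constraint of $\mathcal{V}$ to simplify. On $\mathcal{V}$, by~\eqref{eq:V-velocity}, $(2+b)\Omega_0\gamma_0 + ax_0\Omega_0 E_3 = 0$, i.e. $\Omega_0 E_3 = -\tfrac{2+b}{ax_0}\Omega_0\gamma_0$ when $ax_0\neq 0$ (the remaining cases follow by continuity or are trivial since then $\zeta$ reduces directly). In $\R^3$ notation with $\hat\Omega_0 u = \Omega_0 u$ (careful with the convention: $\dot\gamma = \gamma\times\Omega$ means $\Omega\gamma = -\gamma\times\Omega$, so $\Omega_0 w = -w\times\Omega_0 = \Omega_0\times w$ under the hat map, giving $\widehat{\Omega_0^{\mathrm{vec}}} = -\Omega_0$), I would write $\Omega_0 E_3$ as a cross product with $E_3$ and expand $\gamma_0 - x_0 E_3$ as the component of $\gamma_0$ orthogonal to $E_3$. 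The key computation is then to show that the coefficient of $\hat\Omega_0$ and the coefficient of $(\hat\Omega_0 E_3)\wedge E_3$ in the resulting $\so(3)$-expression collapse, after using $\Delta(x_0) = (2+a)(2+b) - abx_0^2$ and the $\mathcal{V}$-constraint, to $-\tfrac{a}{2+a+b}$ and $+\tfrac{a}{2+a+b}$ respectively, matching~\eqref{eq:xiV3D}.

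I expect the main obstacle to be bookkeeping: tracking the two sign conventions simultaneously (the hat map on $\so(3)$ versus the orientation of $\Omega$ relative to $\gamma$ dictated by $\dot\gamma = \gamma\times\Omega$), and correctly reducing the $abx_0$-term. That last term, $-\tfrac{abx_0}{\Delta(x_0)}(\Omega_0 E_3)\wedge(\gamma_0 - x_0 E_3)$, has no obvious counterpart in~\eqref{eq:xiV3D}, so the crux is to show it vanishes or gets absorbed. Using the $\mathcal{V}$-constraint $(2+b)\Omega_0\gamma_0 = -ax_0\Omega_0 E_3$, one rewrites $\Omega_0(\gamma_0 - x_0 E_3) = \Omega_0\gamma_0 - x_0\Omega_0 E_3 = -\tfrac{ax_0}{2+b}\Omega_0 E_3 - x_0\Omega_0 E_3 = -x_0\tfrac{2+a+b}{2+b}\Omega_0 E_3$; in $\R^3$ this says the vector $\gamma_0 - x_0 E_3$ is parallel to $\Omega_0 E_3$ after applying $\Omega_0$, which means the wedge $(\Omega_0 E_3)\wedge(\gamma_0-x_0E_3)$ can be re-expressed in terms of $\hat\Omega_0$ and $(\hat\Omega_0 E_3)\wedge E_3$ alone. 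Substituting this back and collecting coefficients — all of which will be rational functions of $a,b,x_0$ with denominator $\Delta(x_0)(2+a+b)$ — the identity $\Delta(x_0) = (2+a)(2+b) - abx_0^2$ should make everything telescope to~\eqref{eq:xiV3D}. I would present this as a direct verification, flagging the sign-convention point explicitly and leaving the final algebraic cancellation to the reader as routine.
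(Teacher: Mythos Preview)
Your overall strategy---specialize to $n=3$, use the $\mathcal{V}$-constraint~\eqref{eq:V-velocity} to absorb the extra wedge term $(\Omega_0 E_3)\wedge(\gamma_0-x_0E_3)$, and collect coefficients---is exactly the paper's approach. The paper also separates out the degenerate cases ($x_0=0$ or $\hat\Omega_0 E_3=0$, where both expressions vanish) before treating the generic case.

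Where your proposal is shaky is the step you flag as the crux. You compute $\Omega_0(\gamma_0-x_0E_3)=-x_0\tfrac{2+a+b}{2+b}\,\Omega_0 E_3$ and then assert that this ``means the wedge $(\Omega_0 E_3)\wedge(\gamma_0-x_0E_3)$ can be re-expressed in terms of $\hat\Omega_0$ and $(\hat\Omega_0 E_3)\wedge E_3$ alone.'' That implication is not immediate: knowing that $\Omega_0$ sends $\gamma_0-x_0E_3$ to a multiple of $\Omega_0 E_3$ constrains the component of $\gamma_0-x_0E_3$ \emph{orthogonal} to the axis of $\Omega_0$, but the wedge you want involves $\gamma_0-x_0E_3$ itself, not its image under $\Omega_0$. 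One can push your route through by writing $\gamma_0-x_0E_3=cE_3+\lambda\omega$ (with $\hat\omega=\Omega_0$) and expanding, but this is more work than you suggest and introduces the axis vector $\omega$ explicitly.

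The paper's argument for this step is cleaner and avoids all of that: simply observe that both $\hat\Omega_0-(\hat\Omega_0 E_3)\wedge E_3$ and $(\hat\Omega_0 E_3)\wedge(\gamma_0-x_0E_3)$ annihilate $E_3$ (immediate from $\langle\hat\Omega_0 E_3,E_3\rangle=0$ and $\langle\gamma_0-x_0E_3,E_3\rangle=0$). Since the set of matrices in $\so(3)$ with a prescribed nonzero null-vector is one-dimensional, the two are proportional; the constant $\mu$ is then read off by applying both sides to $\gamma_0$ and invoking~\eqref{eq:V-velocity}. Substituting the resulting relation into~\eqref{eq:xiVnD} collapses it to~\eqref{eq:xiV3D} directly, with no sign-convention bookkeeping needed.
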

\begin{proof}
First suppose that the initial condition has $x_0=0$. Then, because of~\eqref{eq:V-velocity} we have $0=\hat\Omega_0 \gamma_0
= \Omega_0\times \gamma_0$. In other words, the vectors $\Omega_0$ and $\gamma_0$ are parallel. Considering that
the last entry of $\gamma_0$ vanishes, the same is true about $\bar \Omega_3$ in~\eqref{eq:xiV3D}. Thus, in this case
$ \hat \Omega_0-(\hat \Omega_0E_3)\wedge E_3=0$ and both expressions for $\zeta$ equal $0$.

Now suppose that  $\hat \Omega_0E_3=0$. Because of~\eqref{eq:V-velocity} we have again $0=\hat\Omega_0 \gamma_0
= \Omega_0\times \gamma_0$ and the same reasoning  leads to the conclusion that both expressions for  $\zeta$ vanish.

Suppose now that $x_0\neq0$ and  $\hat \Omega_0E_3\neq 0$, and note that both matrices, $\hat \Omega_0-(\hat \Omega_0E_3)\wedge E_3$ and $(\hat \Omega_0E_3)\wedge 
(\gamma_0-x_0E_3)$, annihilate $E_3$. Given that any two matrices in $\so(3)$ with the same null-vector are proportional we have
\begin{equation*}
\mu (\hat  \Omega_0- (\hat \Omega_0E_3)\wedge E_3) ) = (\hat \Omega_0E_3)\wedge 
(\gamma_0-x_0E_3),
\end{equation*}
for a certain $\mu \in \R$. Multiplying the above equation by $\gamma_0$ on the right and using~\eqref{eq:V-velocity}  gives
 \begin{equation*}
\left ( -\mu x_0 \frac{a+b+2}{2+b}-1+x_0^2 \right ) \hat \Omega_0E_3 =0.
\end{equation*}
Because of our assumption that $x_0\neq0$ and  $\hat \Omega_0E_3\neq 0$ we conclude that 
$\mu=- \frac{(1-x_0^2)(2+b)}{x_0(2+a+b)}$ and hence
\begin{equation*}
 (\hat \Omega_0E_3)\wedge 
(\gamma_0-x_0E_3) =  -\frac{(1-x_0^2)(2+b)}{x_0(2+a+b)}  (\hat  \Omega_0- (\hat \Omega_0E_3)\wedge E_3) ).
\end{equation*}
Substitution of the above expression in Eq.~\eqref{eq:xiVnD}  simplifies to Eq.~\eqref{eq:xiV3D}. 
\end{proof}

\subsection*{Comparison with the solutions of the $n$-dimensional free rigid body}

Proposition~\ref{P:EulerTop} showed that for the 3-dimensional Chaplygin sphere, the solutions along the invariant vertical space
$\mathcal{V}$ are also solutions of the  Euler equations for a free rigid body with inertia tensor $\tilde \I=  \I + b\mbox{Id}_3$. It is 
natural to ask if  such property is also valid in the multi-dimensional case. We shall  prove that, although the vertical space
$\mathcal{V}$ is invariant by the the flow of both systems, their  solutions  are generically distinct.

In order to compare the multi-dimensional solutions of the two systems, 
consider the
equations of motion of a multi-dimensional rigid body, accompanied  with the evolution equation of the Poisson vector $\gamma$:
\begin{equation}
\label{eq:Manakov}
\tilde \I (\dot \Omega)= [ \tilde \I(\Omega), \Omega], \qquad \dot \gamma =-\Omega \gamma, \qquad \Omega\in \so(n), \; \gamma\in  \Ss^{n-1}.
\end{equation}
The inertia tensor $\tilde \I := \I +b\mbox{Id}_{\so(n)}$ and we continue to assume that  $\I$ is defined in terms of the mass matrix in \eqref{eq:axisymm}.
We also continue to denote $x:=\gamma_n=\langle \gamma, E_n\rangle$.

\begin{proposition}\footnote{The inclusion of this result  in the final version of the paper was suggested  by one of the  anonymous referees.}
The flow of~\eqref{eq:Manakov} leaves the  vertical space $\mathcal{V}$ defined by~\eqref{eq:V-velocity}  invariant.
\end{proposition}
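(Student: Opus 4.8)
The plan is to show that the defining conditions of $\mathcal{V}$ in the form~\eqref{eq:V-velocity}, namely $(2+b)\Omega\gamma + ax\,\Omega E_n = 0$ and $\langle \Omega\gamma, E_n\rangle = 0$, are preserved along the flow of~\eqref{eq:Manakov}. Since $\mathcal{V}$ is expressed through $\Omega$ and $\gamma$ rather than $M$, I would first rewrite the first equation of~\eqref{eq:Manakov} as an explicit evolution equation for $\Omega$. Because $\tilde\I(\Omega) = (2+b)\Omega + \Omega(2+b)$ is spherical plus the rank-one correction $a\big((\Omega E_n)\wedge E_n\big)$, the operator $\tilde\I$ is the same as $L_\gamma$ from~\eqref{eq:M-O-axi-symm} but with the $b(\Omega\gamma)\wedge\gamma$ term removed; in particular $\tilde\I$ is exactly $L_\gamma$ evaluated at $\gamma = E_n$ with $b$ replaced by $a$, so Proposition~\ref{P:inversion} (specialized appropriately, or redone directly) furnishes $\dot\Omega = \tilde\I^{-1}[\tilde\I(\Omega),\Omega]$ in closed form.

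The key computational step is then to differentiate the vector $v := (2+b)\Omega\gamma + ax\,\Omega E_n$ along the flow and show $\dot v = 0$ whenever $v = 0$ and $\langle\Omega\gamma,E_n\rangle = 0$. Using $\dot\gamma = -\Omega\gamma$, $\dot x = \langle\dot\gamma, E_n\rangle = -\langle\Omega\gamma,E_n\rangle = 0$ on $\mathcal{V}$, and the explicit $\dot\Omega$, one gets $\dot v = (2+b)(\dot\Omega\gamma + \Omega\dot\gamma) + ax\,\dot\Omega E_n = (2+b)\dot\Omega\gamma - (2+b)\Omega^2\gamma + ax\,\dot\Omega E_n$. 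On $\mathcal{V}$ one has $\Omega\gamma = -\tfrac{ax}{2+b}\Omega E_n$, so $\Omega^2\gamma = -\tfrac{ax}{2+b}\Omega^2 E_n$, and all terms should be expressible through $\Omega E_n$, $\Omega^2 E_n$ and the matrix $[\tilde\I(\Omega),\Omega]$ acting on $E_n$ and $\gamma$. The expected payoff is that the combination collapses to zero; I would similarly check that $\tfrac{d}{dt}\langle\Omega\gamma, E_n\rangle = \langle\dot\Omega\gamma, E_n\rangle - \langle\Omega^2\gamma, E_n\rangle$ vanishes on $\mathcal{V}$ — indeed $\langle\Omega^2\gamma,E_n\rangle = -\langle\Omega\gamma,\Omega E_n\rangle$, which on $\mathcal{V}$ equals $\tfrac{ax}{2+b}\langle\Omega E_n,\Omega E_n\rangle$, and this must be matched by the $\dot\Omega$ term. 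An alternative, possibly cleaner, route is to avoid inverting $\tilde\I$ entirely: since the conditions defining $\mathcal{V}$ are $\SO(n-1)$-invariant and~\eqref{eq:Manakov} is $\SO(n-1)$-equivariant (the mass matrix~\eqref{eq:axisymm} commutes with every $h\in\SO(n-1)$), one could argue that $\mathcal{V}$ is a union of group orbits through its relative-equilibrium-type locus; but the direct differentiation is more elementary and I would present that.

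The main obstacle I anticipate is purely organizational bookkeeping in the wedge-product algebra: the term $[\tilde\I(\Omega),\Omega]$ expands as $a[(\Omega E_n)\wedge E_n,\Omega]$ (the spherical part commutes away), and one must carefully evaluate $\big(a[(\Omega E_n)\wedge E_n,\Omega]\big)\gamma$ and the same applied to $E_n$, using the identity $(u\wedge v)w = \langle v,w\rangle u - \langle u,w\rangle v$. Tracking the $\langle\Omega E_n,\gamma\rangle$, $\langle\Omega E_n,\Omega E_n\rangle$ and $x$-dependent scalars through $\tilde\I^{-1}$ is where an arithmetic slip is most likely, so I would set $u := \Omega E_n$ and systematically reduce every expression to the basis $\{u, \Omega u, E_n, \gamma\}$ before collecting coefficients. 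Once the two scalar/vector conditions are shown to be infinitesimally invariant, invariance of $\mathcal{V}$ under the flow of~\eqref{eq:Manakov} follows immediately, completing the proof.
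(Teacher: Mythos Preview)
Your direct-differentiation approach is correct in principle and would lead to a valid proof, but it is considerably more laborious than what the paper does. The paper's argument hinges on a single observation that you overlook: the two defining conditions of $\mathcal{V}$ in~\eqref{eq:V-velocity} are together equivalent to the single vector condition $\tilde\I(\Omega)\gamma = 0$. Indeed, from $\tilde\I(\Omega) = (2+b)\Omega + a(\Omega E_n)\wedge E_n$ one computes $\tilde\I(\Omega)\gamma = (2+b)\Omega\gamma + ax\,\Omega E_n + a\langle\Omega\gamma,E_n\rangle E_n$ and $\langle\tilde\I(\Omega)\gamma, E_n\rangle = (2+a+b)\langle\Omega\gamma,E_n\rangle$, from which the equivalence is immediate. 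With this reformulation, invariance requires no inversion of $\tilde\I$ whatsoever: writing $N := \tilde\I(\Omega)$, the equations~\eqref{eq:Manakov} read $\dot N = [N,\Omega]$ and $\dot\gamma = -\Omega\gamma$, so $\tfrac{d}{dt}(N\gamma) = [N,\Omega]\gamma - N\Omega\gamma = -\Omega(N\gamma)$, a linear ODE in $N\gamma$ whose zero set $\{N\gamma = 0\} = \mathcal{V}$ is then trivially invariant.

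Your route, by contrast, commits to working with $\Omega$ directly and hence to computing $\dot\Omega = \tilde\I^{-1}[\tilde\I(\Omega),\Omega]$ explicitly; the wedge-product bookkeeping you anticipate is real but entirely avoidable. The payoff of the paper's approach is that the first equation of~\eqref{eq:Manakov} already gives the evolution of $\tilde\I(\Omega)$, so repackaging the constraint in terms of $\tilde\I(\Omega)$ rather than $\Omega$ makes the inversion step disappear.
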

\begin{proof}
It is a simple exercise to check that the set where $\tilde \I(\Omega) \gamma =0$  is invariant by the flow of~\eqref{eq:Manakov}. But 
this set exactly coincides with $\mathcal{V}$ given by~\eqref{eq:V-velocity}.
Indeed, due to our assumptions on the inertia tensor $\tilde \I$ we have
\begin{equation}
\label{eq:itildendim}
\tilde \I(\Omega)= (2+b)\Omega +a (\Omega E_n)\wedge E_n,
\end{equation}
and it follows that
\begin{equation*}
\tilde \I(\Omega) \gamma  = (2+b) \Omega \gamma +ax \Omega E_n +a\langle \Omega \gamma , E_n\rangle E_n, \qquad \langle \tilde \I (\Omega) 
\gamma , E_n \rangle = (2+a+b) \langle \Omega \gamma , E_n\rangle.
\end{equation*}
Therefore $\tilde \I(\Omega) \gamma =0$ if and only if  $(\Omega, \gamma ) \in \mathcal{V}$ as defined  by~\eqref{eq:V-velocity}.
\end{proof}

The solutions of~\eqref{eq:Manakov} along $\mathcal{V}$ are given by the following.
\begin{proposition}
For the inertia tensor $\tilde \I$ given by~\eqref{eq:itildendim},
the solution of the multidimensional rigid body equations~\eqref{eq:Manakov} with initial condition  $(\Omega_0,\gamma_0)\in \mathcal{V}$ is quasi-periodic and given by
\begin{equation*}
\Omega(t) = \exp(\chi t) \Omega_0  \exp(-\chi t), \quad \gamma(t)=  \exp(\chi t) \gamma_0,
 \end{equation*}
where
\begin{equation}
\label{eq:solManakov}
\chi = \frac{-a}{2+a+b}\left ( \Omega_0 - (\Omega_0E_n)\wedge E_n \right ).
\end{equation}
\end{proposition}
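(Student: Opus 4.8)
The plan is to follow the same strategy as in the proof of Theorem~\ref{Th:Re}: show that $\chi$ lies in the Lie algebra of the group $\SO(n-1)$ defined by~\eqref{eq:SOnminus1}, that the system~\eqref{eq:Manakov} is $\SO(n-1)$-equivariant, and that the proposed curve $(\Omega(t),\gamma(t))=\exp(\chi t)\cdot(\Omega_0,\gamma_0)$ has the correct velocity at $t=0$; the conclusion then follows from the equivariance of the flow (as in \cite[Lemma 4.2.1.1]{CDS-book}, already invoked in the proof of Theorem~\ref{Th:Re}), and the quasi-periodicity is automatic since $\SO(n-1)$ is compact. First I would check that $\chi E_n=0$, which is immediate from $((\Omega_0 E_n)\wedge E_n)E_n=\Omega_0 E_n$ (valid because $\langle\Omega_0 E_n,E_n\rangle=0$). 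Hence $t\mapsto\exp(\chi t)$ is a one-parameter subgroup of $\SO(n-1)$, and since $hE_n=E_n$ and $(h\Omega E_n)\wedge(hE_n)=h\,((\Omega E_n)\wedge E_n)\,h^{-1}$ for every $h\in\SO(n-1)$, the operator $\tilde \I$ in~\eqref{eq:itildendim}, and therefore the whole system~\eqref{eq:Manakov}, is $\SO(n-1)$-equivariant.

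It then remains to verify that the velocity $([\chi,\Omega_0],\,\chi\gamma_0)$ at $t=0$ of the curve agrees with the vector field defined by~\eqref{eq:Manakov} at $(\Omega_0,\gamma_0)$. Writing $P(\Omega):=(\Omega E_n)\wedge E_n$, so that $\tilde \I(\Omega)=(2+b)\Omega+aP(\Omega)$, $P^2=P$, and $\chi=\frac{-a}{2+a+b}(\Omega_0-P(\Omega_0))$, I would first apply the identity $[\Omega,u\wedge v]=(\Omega u)\wedge v+u\wedge(\Omega v)$ with $u=\Omega_0 E_n$, $v=E_n$, together with $(\Omega_0 E_n)\wedge(\Omega_0 E_n)=0$, to obtain $[P(\Omega_0),\Omega_0]=-(\Omega_0^2 E_n)\wedge E_n$. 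This element lies in the image of $P$, so $P([P(\Omega_0),\Omega_0])=[P(\Omega_0),\Omega_0]$, and a one-line computation then gives $\tilde \I([\chi,\Omega_0])=a\,[P(\Omega_0),\Omega_0]=[\tilde \I(\Omega_0),\Omega_0]$, which is the first of the defining relations of~\eqref{eq:Manakov}. For the second, I would use that $(\Omega_0,\gamma_0)\in\mathcal{V}$: from $\langle\Omega_0\gamma_0,E_n\rangle=0$ one gets $((\Omega_0 E_n)\wedge E_n)\gamma_0=x_0\,\Omega_0 E_n$, whence $\chi\gamma_0=\frac{-a}{2+a+b}(\Omega_0\gamma_0-x_0\,\Omega_0 E_n)$, and substituting the relation $(2+b)\Omega_0\gamma_0+a x_0\,\Omega_0 E_n=0$ from~\eqref{eq:V-velocity} turns this into $-\Omega_0\gamma_0$. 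The degenerate case $a=0$ is trivial, since then $\chi=0$, $\tilde \I=(2+b)\,\mbox{Id}$, and $\mathcal{V}$ forces $\Omega_0\gamma_0=0$, so both the curve and the vector field are stationary.

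Having matched the velocities, the equivariance of~\eqref{eq:Manakov} together with the fact that the curve is the $\SO(n-1)$-orbit of $(\Omega_0,\gamma_0)$ along the one-parameter subgroup $\exp(\chi t)$ shows that it is an integral curve of~\eqref{eq:Manakov}; in particular $x(t)=\langle\exp(\chi t)\gamma_0,E_n\rangle=\langle\gamma_0,E_n\rangle$ is constant. Quasi-periodicity then follows exactly as in the discussion preceding Theorem~\ref{Th:Re}, the orbit being contained in a torus of dimension at most $\mbox{rank}\,\SO(n-1)=[\frac{n-1}{2}]$. I expect the only non-routine point to be the identity $\tilde \I([\chi,\Omega_0])=[\tilde \I(\Omega_0),\Omega_0]$, whose crux is the observation that $[P(\Omega_0),\Omega_0]$ remains in the $P$-invariant subspace $\{w\wedge E_n:w\in\R^n\}$; everything else is bookkeeping of the type already carried out for Theorem~\ref{Th:Re}.
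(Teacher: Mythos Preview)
Your proof is correct and follows the same strategy as the paper: verify $\chi E_n=0$, note $\SO(n-1)$-equivariance, and match the velocity of the orbit curve with the vector field at $t=0$. The only difference is cosmetic: the paper condenses your explicit computations into the single identity $\Omega_0=-\chi+\frac{1}{2+a+b}\tilde\I(\Omega_0)$, from which both $[\tilde\I(\Omega_0),\Omega_0]=[\chi,\tilde\I(\Omega_0)]$ and (using $\tilde\I(\Omega_0)\gamma_0=0$ on $\mathcal V$) $\chi\gamma_0=-\Omega_0\gamma_0$ follow in one line.
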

\begin{proof}
Just like the axisymmetric Chaplygin sphere, the system~\eqref{eq:Manakov} 
is equivariant with respect to the action $h\cdot (\Omega, \gamma)\mapsto (h\Omega h^{-1}, h\gamma)$ where  $h\in \SO(n-1)$ as defined
by \eqref{eq:SOnminus1}. Moreover,  $\chi$ belongs to the Lie algebra  of $\SO(n-1)$ since $\chi E_n=0$. Considering that
\begin{equation*}
\Omega_0=  -\chi + \frac{1}{2+a+b} \tilde \I (\Omega_0),
\end{equation*}
we have $[\tilde \I (\Omega_0) ,\Omega_0]=[\chi , \tilde \I (\Omega_0) ]$ for all $\Omega_0\in \so(n)$, and $-\Omega_0\gamma_0=\chi \gamma_0$
for $(\Omega_0,\gamma_0)\in \mathcal{V}$. The rest of the proof proceeds as in the proof of Theorem~\ref{Th:Re}.
\end{proof}

For $n\geq 4$ the expression for $\zeta$ in~\eqref{eq:xiVnD} does not simplify to agree with  $\chi$ given by~\eqref{eq:solManakov}
(compare with Proposition~\ref{P:n3special} for $n=3$). Moreover,
it is possible to find initial conditions  $(\Omega_0,\gamma_0)\in \mathcal{V}$ with the property that $[\chi-\zeta, \Omega_0]\neq 0$.
For these initial conditions we have
\begin{equation*}
 \exp(\zeta t) \Omega_0  \exp(-\zeta t) \neq  \exp(\chi t) \Omega_0  \exp(-\chi t)
\end{equation*}
showing that the solutions for both systems are generally distinct.

\section{The 4-dimensional Chaplygin sphere}
\label{S:4D}

Now consider in more detail the special case $n=4$.  We briefly describe some facts that are valid for general inertia tensors
and then go back to our discussion of the  axisymmetric case.
In our analysis we shall write
\begin{equation*}
 M = \begin{pmatrix} \hat k & u \\ -u^T & 0 \end{pmatrix}, \quad  \Omega = \begin{pmatrix} \hat \eta & \xi \\ -\xi^T & 0 \end{pmatrix}\in \so(4), \quad \gamma = (q,x)^T \in \R^4,
\end{equation*}
where $x\in [-1,1]$, and $k,u, \xi, \eta,q \in \R^3$ are column vectors. 
The equations  of motion \eqref{eq:motion-general} rewrite in our notation as:
\begin{equation}
\label{eq:motion-n4}
\dot k = k\times \eta+ u \times \xi,  \quad \dot u = k\times \xi + u\times \eta, \quad \dot q  =q \times \eta -x\xi,  \quad \dot x  = \langle q,\xi
\rangle.
\end{equation}
The geometric integral is $\|q\|^2+x^2=1$ and the phase space $P=\R^3\times \R^3 \times \Ss^3 \ni (k, u, (q,x))$ is 9-dimensional.

The first integrals arising from the conservation of angular momentum about the contact point may be written down explicitly as:
\begin{equation*}
F_1= \langle k,u \rangle, \quad
F_2= \|k\|^2+\|u\|^2, \quad F_3=\| u\times q\|^2+x^2\|k\|^2+\langle k,q \rangle^2-2x\langle u,k\times q\rangle.
\end{equation*}
Indeed, these functions satisfy
\begin{equation*}
p(\lambda, \sigma) = \det ( M+\sigma \gamma \gamma^t - \lambda \mbox{Id}_4) = \lambda^4 -\lambda^3\sigma \|\gamma\|^2 +\lambda^2F_2 +\lambda \sigma F_3 -F_1^2,
\end{equation*}
and the coefficients of this polynomial in $(\lambda, \sigma)$ are first integrals because of the iso-spectral evolution \eqref{eq:iso-spectral}. On the other hand, the energy integral  writes as:
\begin{equation*}
H=\frac{1}{2} ( \langle u,\xi \rangle +\langle k,\eta\rangle).
\end{equation*}

The invariant sets \eqref{eq:VHgeneraln} of horizontal $\mathcal{H}$ and vertical $\mathcal{V}$  momentum 
 are $6$-dimensional and may be  presented in our notation as:
\begin{equation*}
\begin{split}
&\mathcal{H}=\{(k, u, (q,x))\in P \; : \;  x^2k +\langle k, q \rangle q - xq\times u=0 \quad \mbox{and} \quad (1-x^2)u-  
x k\times q - \langle u,q \rangle q=0 \}, \\
&\mathcal{V}=\{(k, u, (q,x))\in P \; : \; \langle u,q \rangle =0 \quad \mbox{and} \quad  k\times q +xu =0 \}.
\end{split}
\end{equation*}

\subsection{The axisymmetric 4D Chaplygin sphere}

We now continue working with our assumption that the sphere is axisymmetric and the mass tensor has the form~\eqref{eq:axisymm}.
In terms of the notation introduced above, Eq.~\eqref{eq:M-O-axi-symm} yields
\begin{equation*}
k=2\eta+b q\times (\eta \times q)-   bx \xi \times q, \qquad u=(2+a+b)\xi -b q\times (\xi \times q) +bx \eta \times q.
\end{equation*}
The above relations may be inverted, e.g. using Proposition~\ref{P:inversion}, to give
\begin{equation}
\label{eq:inverse-Leg}
\begin{split}
\xi &= \frac{b(2+b)(1-x^2)+\Delta(x)}{(a+b+2)\Delta(x)} u - \frac{b(2+b)\langle u,q \rangle }{(a+b+2)\Delta(x)}q - \frac{bx}{\Delta(x)} k\times q, \\
\eta &= \frac{(a+2+bx^2)}{\Delta(x)} k + \frac{(a+2)b\langle k,q \rangle}{2\Delta (x)} q + \frac{bx}{\Delta(x)} u \times q.
\end{split}
\end{equation}
Substitution of \eqref{eq:inverse-Leg} into \eqref{eq:motion-n4} gives  the equations of motion written in explicit form. As follows
from Proposition~\ref{P:Measure}, the resulting equations possess the invariant measure
\begin{equation*}
\mu = \frac{dk \,du \, dq  \, dx}{\Delta(x)}.
\end{equation*}

\subsection{Reduction by $\SO(3)$}

The $\SO(3)$  symmetry introduced in Section~\ref{SS:extra-symmetry} takes the following form in our notation.
The action of $h\in \SO(3)$ on a point $(k, u, (q,x))$ in phase space $P$ is 
\begin{equation*}
h\cdot (k, u, (q,x)) = (hk,hu,(hq,x)).
\end{equation*}
It is seen  from \eqref{eq:inverse-Leg} that $(\xi, \eta)$ transform to $(h\xi, h\eta)$ and hence, as predicted by the
discussion in Section~\ref{SS:extra-symmetry}, the equations of motion 
\eqref{eq:motion-n4} are equivariant. 

The action is not free since points in $P$ where  $k$, $u$ and  $q$ are parallel have a one-dimensional isotropy
subgroup isomorphic to $\SO(2)$. As a consequence, 
the reduced space $P/\SO(3)$  is  not smooth but is rather a stratified space.
To describe it  we recall 
 that the ring of  invariants  of the action is generated by the pairwise
inner products of the vectors $k$, $u$, $q$ and the triple vector product $\delta = \langle q,k\times u \rangle$, see e.g. \cite{Kraft}.
Let us define:
\begin{equation*}
A=\|k\|^2, \quad B=\| u\|^2, \quad E=\langle k,q\rangle, \quad G=\langle u,q \rangle.
\end{equation*}
We also recall that $\|q\|^2=1-x^2$ and  $F_1=\langle k,u \rangle$. These invariants are not independent but satisfy
\begin{equation*} 
\delta^2=\det(\Lambda) \quad \mbox{where} \quad \Lambda= \begin{pmatrix} A & F_1 & E \\  F_1 & B & G \\ E & G & 1-x^2 \end{pmatrix}.
\end{equation*}
Moreover they satisfy the inequalities
\begin{equation}
\label{eq:PrincMinors}
AB-F_1^2\geq 0, \qquad B(1-x^2)-G^2\geq 0, \qquad A(1-x^2)-E^2\geq 0.
\end{equation}

The reduced space $P/\SO(3)$ is isomorphic as a stratified space   to the 6 dimensional  
semi-algebraic variety $\mathcal{R}$ imbedded in $\R^7$
as 
\begin{equation*}
\mathcal{R}= \left \{ (A, B, F_1, E, G, x , \delta) \in \R^7 \, : \,  A, B \geq 0, \; -1\leq x \leq 1, \; \delta^2 -\det (\Lambda)=0  \;\; \mbox{and 
\eqref{eq:PrincMinors} hold}\;\;  \right \}.
\end{equation*}

As may be verified directly using Eqs.~\eqref{eq:motion-n4} and  \eqref{eq:inverse-Leg},  the reduced equations of motion are the restriction of the following vector field on $\R^7$ to $\mathcal{R}$: 
\begin{equation}
\label{eq:red-system}
\begin{split}
\dot A &= -\frac{2b(2+b)G\delta}{(a+b+2)\Delta(x)} + \frac{2bx}{\Delta(x)}(F_1E-GA), \\ 
\dot B & = \frac{2b(2+b)G\delta}{(a+b+2)\Delta(x)} - \frac{2bx}{\Delta(x)}(F_1E-GA), \\
\dot E & = -\frac{abx}{(a+b+2)\Delta(x)}GE- \frac{2x}{\Delta(x)}F_1, \\
\dot G  &=  \frac{bx}{\Delta(x)} \left (  A(1-x^2)   -E^2 + \frac{(2+b)}{2+a+b} G^2      \right )   -x \left (    \frac{b(2+b)(1-x^2)+\Delta(x)}{(a+b+2)\Delta(x)} \right) B + \frac{2+b-2bx^2}{\Delta(x)}\delta,
 \\
\dot F_1 & =0, \\ \dot x&=\frac{G}{a+b+2}, \\
\dot \delta &= \left ( \frac{bx^2(a+b+2)-\Delta(x)}{(a+b+2)\Delta(x)} \right ) AG -  \left ( \frac{b(2+b)(1-x^2)}{(a+b+2)\Delta(x)} \right ) BG
+\left ( \frac{2+b-2bx^2}{\Delta(x)} \right )F_1E  \\ & \qquad \qquad \qquad
+\frac{b(2+b)}{(a+b+2)\Delta(x)}G(G^2-E^2) - \frac{abx}{(a+b+2)\Delta(x)}G\delta.
\end{split}
\end{equation}

The 4, generically independent, integrals of the system are invariant under the action and descend to  
\begin{equation*}
\begin{split}
&F_1, \qquad F_2= A+B, \qquad F_3=x^2A+B(1-x^2)+E^2-G^2+2x\delta , \\
&H=\ \frac{a+2+bx^2}{2\Delta(x)}  A +  \frac{2+b(1-x^2)}{2\Delta(x)} B 
-   \frac{b(2+b)}{2(a+b+2)\Delta(x)} G^2 +  \frac{b(2+a)}{4\Delta(x)} E^2 +\frac{bx}{\Delta}\delta.
\end{split}
\end{equation*}

The  invariant measure also passes to the quotient. It is the restriction of the volume form  
\begin{equation*}
\mu_r= \frac{dA\, dB \, dE \, dG \, dF_1 \, dx \, d\delta}{\Delta(x)^2}.
\end{equation*}
on the ambient space $\R^7$ to $\mathcal{R}$.

Considering that the generic dimension of $P/\SO(3)$ is 6, and there exist  4 independent integrals and an invariant measure, the Euler-Jacobi  Theorem (see e.g. \cite{Arnold, BorMam2008}) implies that the reduced system is integrable. In particular,  the regular compact level sets of the integrals which have no equilibrium points are 2-tori where the flow is quasi-periodic 
after a time reparametrisation. Determining the nature of the  reconstruction of this type of dynamics to $P$ is a difficult problem for which little is known~\cite{Zung, FGNG}.   

On the other hand, the dynamics on the subsets $\mathcal{V}$ and $\mathcal{H}$ of $P$ may be completely 
described   in the light of Theorem~\ref{Th:Re} and the work of Jovanovi\'c~\cite{Jovan}, respectively. For completeness, we show how these
results may be deduced from the reduced system~ \eqref{eq:red-system}.
First note that $\mathcal{V}$ and $\mathcal{H}$ are $\SO(3)$-invariant and, 
under the symmetry reduction, project to subsets $\mathcal{V}/\SO(3)$ and $\mathcal{H}/\SO(3)$ of $\mathcal{R}$ which are invariant by the flow of  \eqref{eq:red-system}.
These are given by
\begin{equation*}
\mathcal{H}/\SO(3)= \{ (A, B, F_1, E, G, x , \delta) \in \mathcal{R} \, : \, F_1=0, \quad  E=0, \quad xA=-\delta, \quad (1-x^2)B-G^2=-x\delta \},
\end{equation*}
and
\begin{equation*}
\mathcal{V}/\SO(3)= \{ (A, B, F_1, E, G, x , \delta) \in \mathcal{R} \, : \,G=0, \quad  F_1=0, \quad xB=\delta, \quad (1-x^2)A-E^2=x\delta \}.
\end{equation*}

\subsection*{The  dynamics in the case of vertical momentum}

Theorem~\ref{Th:Re} guarantees that the set $\mathcal{V}/\SO(3)$ consists of equilibrium points. This  may be verified by substituting the relations in the description of $\mathcal{V}/\SO(3)$ into the reduced equations~\eqref{eq:red-system} and checking that the right hand side of the equations vanishes.
Considering that  $\SO(3)$ has rank 1, we 
conclude that the
$\mathcal{V}$ is foliated by periodic orbits.

\subsection*{The  dynamics in the case of horizontal momentum}

This is the case considered by Jovanovi\'c \cite{Jovan}, whose work implies  integrability for arbitrary $n$. 

We note that along this set the integrals $F_1$ and $F_3$  vanish.  
Now we restrict the flow  to the level set within $\mathcal{H}/\SO(3)$ of the other two integrals. Suppose that $H=h\geq 0$ and $F_2=f_2\geq 0$. We may write
\begin{equation}
\label{eq:A-on-J}
A=\frac{(2h(1-x^2)(a+b+2) -G^2)\Delta(x)}{(a+b+2)(2+a(1-x^2))} = (1-x^2)f_2 -G^2.
\end{equation}
The evolution equation for $G$ in \eqref{eq:red-system} may be simplified  using 
 the identities that define  $\mathcal{H}/\SO(3)$  together with  \eqref{eq:A-on-J} to eliminate the dependence on
 $\delta, B, A$ and $G^2$. Together with the equation for $x$ one  obtains the uncoupled  $2\times 2$ linear system
 \begin{equation*}
\dot G=- \frac{f_2-2bh}{2+a}x, \qquad \dot x= \frac{G}{a+b+2}.
\end{equation*}
The solution of this system with initial condition $x(0)=x_0$,  $G(0)=G_0$ is
 \begin{equation*}
x(t) = x_0 \cos \omega t, \qquad G(t) = -(a+b+2)x_0 \omega \sin \omega t,
\end{equation*}
where 
\begin{equation*}
\omega^2= \frac{f_2-2bh}{2+a}=\frac{aG_0^2+4h(a+b+2)}{(a+b+2)(2+a(1-x_0^2))} \geq 0.
\end{equation*}
The evolution of $A$, $B$ and $\delta$, is also periodic as may be seen respectively
 from \eqref{eq:A-on-J}, and the relations $B=f_2-A$, and $\delta=-xA$.
Therefore $\mathcal{H}/\SO(3)$  is foliated by periodic orbits. Since the rank of $\SO(3)$ is 1, the
  reconstructed motion on $\mathcal{H}$ consists of
quasi-periodic motion on 2-dimensional tori (see e.g. \cite{Field80}, \cite{Field-book}).
This is in agreement with \cite[Theorem 9]{Jovan}.

\subsection{A family of steady rotations}

Finally, we describe  another class of solutions of Eqs.~\eqref{eq:motion-n4} and  \eqref{eq:inverse-Leg} which have constant
angular velocity and lead to quasi-periodic dynamics on $P=\R^3\times \R^3 \times \Ss^3$. For this purpose, it is convenient to write Eqs.~\eqref{eq:motion-n4} and  \eqref{eq:inverse-Leg} in terms of  $\xi$,  $\eta$, $q$ and $x$, and without involving $k$ or $u$. After a long,
but straightforward calculation, one finds:
\begin{equation}
\label{eq:Motion-Velocity}
\begin{split}
&\dot \xi = \frac{a(2+b(1-x^2))}{\Delta(x)} \xi \times \eta - \frac{ab(b+2)\langle q,\xi\times \eta \rangle}{(2+a+b)\Delta (x)}  q, \qquad 
\dot \eta=-\frac{abx}{\Delta(x)}   q\times ( \xi\times \eta),  \\ 
 &\dot q  =q \times \eta -x\xi,  \qquad \dot x  = \langle q,\xi \rangle.
\end{split}
\end{equation}
\begin{proposition}
\label{P:steady-rot}
If the initial angular velocity  
\begin{equation*}
\Omega_0= \begin{pmatrix} \hat \eta_0 & \xi_0 \\ -\xi_0^T & 0 \end{pmatrix}\in \so(4),
\end{equation*}
of the 4D axisymmetric Chaplygin sphere satisfies $\eta_0 \times \xi_0=0$, then the angular velocity remains constant throughout the motion and the solution for $M(t)$ and $\gamma(t)$ with respective initial conditions $M_0$, $\gamma_0$ is 
given by
\begin{equation*}
M(t)= \exp (-\Omega_0 t) M_0  \exp (\Omega_0 t), \qquad  \gamma(t)= \exp (-\Omega_0 t) \gamma_0.
\end{equation*}
\end{proposition}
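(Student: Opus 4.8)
The plan is to verify directly that the proposed curve $(M(t),\gamma(t))$ is an integral curve of the equations of motion, using essentially the same strategy as in the proof of Theorem~\ref{Th:Re}: exhibit $\Omega_0$ as a generator of a one-parameter subgroup, check that the proposed motion coincides with the corresponding group action, and invoke equivariance. First I would show that the hypothesis $\eta_0\times\xi_0=0$ forces $\dot\Omega=0$ at $t=0$. Looking at the first two equations of~\eqref{eq:Motion-Velocity}, the right-hand sides involve only $\xi_0\times\eta_0$ and $\langle q_0,\xi_0\times\eta_0\rangle$ and $q_0\times(\xi_0\times\eta_0)$; since $\eta_0\times\xi_0=0$ all of these vanish, so $\dot\xi(0)=0$ and $\dot\eta(0)=0$, i.e. $\dot\Omega(0)=0$.

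Next I would argue that this property persists: if $\Omega$ stays constant, the proposed formulas $M(t)=\exp(-\Omega_0 t)M_0\exp(\Omega_0 t)$, $\gamma(t)=\exp(-\Omega_0 t)\gamma_0$ do solve~\eqref{eq:motion-general}. Indeed, differentiating gives $\dot M = [M,\Omega_0]$ and $\dot\gamma = -\Omega_0\gamma$, which are exactly~\eqref{eq:motion-general} provided the $\Omega$ computed from $(M(t),\gamma(t))$ via Proposition~\ref{P:inversion} equals $\Omega_0$ for all $t$. But conjugation by $\exp(-\Omega_0 t)$ is the action~\eqref{eq:action} of the element $\exp(-\Omega_0 t)$ — note $\Omega_0$ need not lie in the Lie algebra of $\SO(n-1)$ here, but $L_\gamma^{-1}$ is equivariant under the full conjugation action $(M,\gamma)\mapsto(gMg^{-1},g\gamma)$ for any $g\in\SO(n)$ — so $\Omega(t) = \exp(-\Omega_0 t)\,\Omega_0\,\exp(\Omega_0 t) = \Omega_0$ since $\Omega_0$ commutes with $\exp(\Omega_0 t)$. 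Thus the proposed curve is consistent, and by uniqueness of integral curves it is the solution; this simultaneously confirms that $\Omega(t)\equiv\Omega_0$.

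A cleaner way to package this, which I would prefer, is to mimic Theorem~\ref{Th:Re} verbatim with $\zeta$ replaced by $-\Omega_0$: one checks that at the initial condition the vector field of~\eqref{eq:motion-general} equals the infinitesimal generator of the conjugation $\R$-action $t\mapsto\exp(-\Omega_0 t)$, namely $(\dot M(0),\dot\gamma(0)) = ([M_0,\Omega_0],-\Omega_0\gamma_0)$, which holds because $\dot\Omega(0)=0$ means~\eqref{eq:motion-velocitynD} reduces to $[\I(\Omega_0),\Omega_0]=0$ — wait, that is not automatic, so one instead simply observes that $\dot M(0) = [M_0,\Omega_0]$ is the content of the first equation in~\eqref{eq:motion-general} evaluated at $t=0$, which always holds, and $\dot\gamma(0)=-\Omega_0\gamma_0$ likewise. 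Then equivariance of~\eqref{eq:motion-general} under the full $\SO(n)$ conjugation action, together with the fact that the proposed curve is the orbit of this action along the subgroup $t\mapsto\exp(-\Omega_0 t)$, yields that it is an integral curve (e.g. by \cite[Lemma 4.2.1.1]{CDS-book}), exactly as in Theorem~\ref{Th:Re}. The quasi-periodicity on $P$ then follows since $\SO(n)$ is compact.

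The main obstacle is the self-consistency point: one must be sure that along the orbit $t\mapsto(\exp(-\Omega_0 t)M_0\exp(\Omega_0 t),\exp(-\Omega_0 t)\gamma_0)$ the angular velocity recovered through the Legendre-type map $L_\gamma^{-1}$ really stays equal to $\Omega_0$, so that the first equation in~\eqref{eq:motion-general} is genuinely $\dot M=[M,\Omega]$ and not $\dot M=[M,\Omega_0]$ with $\Omega\neq\Omega_0$. This is where the equivariance of $L_\gamma$ under conjugation by arbitrary $g\in\SO(n)$ is essential (the earlier symmetry discussion only used $g\in\SO(n-1)$ because $E_n$ had to be fixed; but here conjugation by $\exp(-\Omega_0 t)$ moves $E_n$, and yet $L_\gamma^{-1}(gMg^{-1},g\gamma)=gL_\gamma^{-1}(M,\gamma)g^{-1}$ still holds since $\J$ itself is not $\SO(n)$-invariant — so in fact the correct statement is that the whole configuration, including the axis $E_n$, is carried along, but $E_n$ is fixed in the body frame and $\exp(-\Omega_0 t)$ acts on body vectors, so one must instead verify directly from~\eqref{eq:Motion-Velocity} that $\dot\xi\equiv 0$, $\dot\eta\equiv 0$ holds identically, i.e. that $\eta(t)\times\xi(t)=0$ is preserved). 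I would therefore close the argument by the ODE route: from~\eqref{eq:Motion-Velocity}, the set $\{\eta\times\xi=0\}$ is invariant, since on it $\dot\xi=0$ and $\dot\eta=0$; hence $\Omega(t)\equiv\Omega_0$, and then the displayed formulas for $M(t)$, $\gamma(t)$ are immediate by integrating $\dot M=[M,\Omega_0]$ and $\dot\gamma=-\Omega_0\gamma$.
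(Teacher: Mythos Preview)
Your proposal is correct, and after exploring (and rightly abandoning) the full $\SO(n)$-equivariance route, you settle on exactly the paper's argument: from~\eqref{eq:Motion-Velocity} the set $\{\eta\times\xi=0\}$ is invariant and on it $\dot\xi=\dot\eta=0$, so $\Omega(t)\equiv\Omega_0$, after which the formulas for $M(t)$ and $\gamma(t)$ follow by integrating the linear equations $\dot M=[M,\Omega_0]$, $\dot\gamma=-\Omega_0\gamma$. The paper's proof is precisely this two-line ``ODE route''; your detour through equivariance is unnecessary but your diagnosis of why it fails (the axisymmetric $\J$ breaks the full $\SO(n)$ symmetry of $L_\gamma$) is accurate.
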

\begin{proof}
 It is readily seen from~\eqref{eq:Motion-Velocity} that if
$\xi$ and $\eta$ are parallel at some time,  then they remain  constant throughout the motion.
So, for these initial conditions the angular velocity $\Omega(t)= \Omega_0$  and we have a steady rotation. 
It is then straightforward to check that $M(t)$ and $\gamma(t)$ as defined above satisfy~\eqref{eq:motion-general}.
\end{proof}

Recall that  points in $P$ with non-trivial $\SO(3)$ isotropy are those for which  $k, u$ and $q$ are collinear.  In view of  Eq. \eqref{eq:inverse-Leg}  at these points the vectors $\xi$, $\eta$ and  $q$ are also parallel. Therefore, Proposition~\ref{P:steady-rot} describes the dynamics for this type of initial conditions.

We finish our discussion by indicating that the initial conditions in Proposition~\ref{P:steady-rot} do not generically  belong to
 $\mathcal{H}$
nor $\mathcal{V}$.

\section*{Conclusions and future work}

We considered the dynamics of the axisymmetric $n$-dimensional Chaplygin sphere. Our 
 main contribution  is to show that the dynamics 
is quasi-periodic when the angular momentum about the contact point is vertical. Also, to indicate how a further reduction of the system
by the additional $\SO(n-1)$ symmetry may be useful to understand the dynamics of the system for generic initial conditions. 
Indeed, in the case
$n=4$ the dynamics on the reduced system $P/\SO(3)$ was shown to be integrable by the Euler-Jacobi theorem.
The following   problems remain open:
\begin{enumerate}

\item[ $\bullet$] Determine if the equations  \eqref{eq:motion-general} on $P=\so(n)\times \Ss^{n-1}$ allow a Hamiltonisation.  Such 
Hamiltonisation is known to exist only in the case $n=3$ \cite{BorMamChap} and for initial conditions on $\mathcal{H}$ 
 for $n\geq 4$ \cite{Jovan}.

The existence of a Hamiltonian structure for the equations on $P$ would be useful to reconstruct the dynamics
from the reduced space $P/\SO(n-1)$. In particular, for $n=4$ the  (reparametrised) quasi-periodic dynamics on $P/\SO(3)$
predicted by Euler-Jacobi's Theorem would be guaranteed to be quasi-periodic on $P$ (see~\cite{Zung}).

\item[$\bullet$] Obtain the reduced equations on $P/\SO(n-1)$ for $n\geq 5$ and determine if they are integrable.

\end{enumerate}

\section*{Acknowledgements}

 I am grateful to the anonymous referees for  remarks that helped me to improve this paper.
I am very thankful to J. Montaldi for a conversation that inspired this research during my recent visit to the 
 University of Manchester. I  acknowledge support of the Alexander von Humboldt Foundation for a Georg Forster Experienced Researcher Fellowship that funded a research visit to TU Berlin where this work was done. Finally, I express my gratitude to A.V. Borisov for his invitation
to  submit a paper for the special issue of Regular and Chaotic Dynamics in the honour of S.A. Chaplygin on the occasion of his 150th anniversary.

\vskip 1cm

\noindent LGN: Departamento de Matem\'aticas y Mec\'anica, 
IIMAS-UNAM.
Apdo. Postal 20-126, Col. San \'Angel,
Mexico City, 01000,  Mexico. luis@mym.iimas.unam.mx.  \\

\end{document}